\documentclass{article}

\usepackage[preprint]{neurips_2026}

\usepackage[utf8]{inputenc} 
\usepackage[T1]{fontenc}    
\usepackage{hyperref}       
\usepackage{url}            
\usepackage{booktabs}       
\usepackage{amsfonts}       
\usepackage{nicefrac}       
\usepackage{microtype}      
\usepackage{xcolor}         

\usepackage{amsmath}
\usepackage{amssymb}
\usepackage{mathtools}
\usepackage{amsthm}

\theoremstyle{plain}
\newtheorem{theorem}{Theorem}[section]

\theoremstyle{definition}
\newtheorem{definition}[theorem]{Definition}

\theoremstyle{remark}
\newtheorem{remark}[theorem]{Remark}

\usepackage{microtype}
\usepackage{graphicx}
\usepackage{subcaption}
\usepackage{booktabs} 
\usepackage{enumitem}

\title{High-Dimensional Latents Should Be Diagnosed Through Phase Structure}

\author{1\textsuperscript{st} Alejandro Asc\'arate\thanks{
\small School of Electrical Engineering and Robotics, Faculty of Engineering,\\
\small $-\;\,\,\,\text{Queensland}$ University of Technology, Brisbane, Queensland, Australia} \\
a.ascaratecastro@hdr.qut.edu.au
\and
2\textsuperscript{nd}  L\'eo Lebrat$^{*}$ \\
leo.lebrat@qut.edu.au
\and
3\textsuperscript{rd} Rodrigo Santa Cruz$^{*}$ \\
rodrigo.santacruz@qut.edu.au
\and
4\textsuperscript{th} Clinton Fookes$^{*}$ \\
c.fookes@qut.edu.au
\and
5\textsuperscript{th} Olivier Salvado$^{*}$ \\
olivier.salvado@qut.edu.au
}

\begin{document}

\maketitle

\begin{abstract}
We study autoencoder and variational-autoencoder latent spaces through the lens of spin-glass theory. The paper has two components. First, we formalize a latent-space spin-glass dictionary: for a fixed decoder, the reconstruction term together with a hyperspherical coordinates prior induces a Hamiltonian on the latent sphere, where latent coordinates play the role of continuous spins and the prior acts as an external magnetic field. This allows us to import operational spin-glass diagnostics---overlap distributions, susceptibility, and block-spin coarse-graining---to detect ordered, disordered, and edge-of-stability phases in trained latent representations. Second, we show that deliberately driving the latent system toward the edge-of-stability of the topological trivialization regime has concrete downstream consequences. In generation, hyperspherical compression improves the reconstruction--generation trade-off on CIFAR-10 and CelebA64, yielding lower self-FID while preserving or improving reconstruction. In anomaly detection, the same semi-ordered latent geometry improves both fully unsupervised and conditional OOD detection, including real-world Mars Rover and Galaxy Zoo datasets, as well as CIFAR-10/100 and Imagenette-based OOD benchmarks. We therefore advocate a phase-aware evaluation paradigm for AEs/VAEs, in which spin-glass observables complement standard ML metrics and expose the latent regimes that underlie downstream success or failure in many cases.
\end{abstract}

\section{Introduction}

\begin{figure}[!h]
    \centering
    \includegraphics[width=1.0\linewidth]{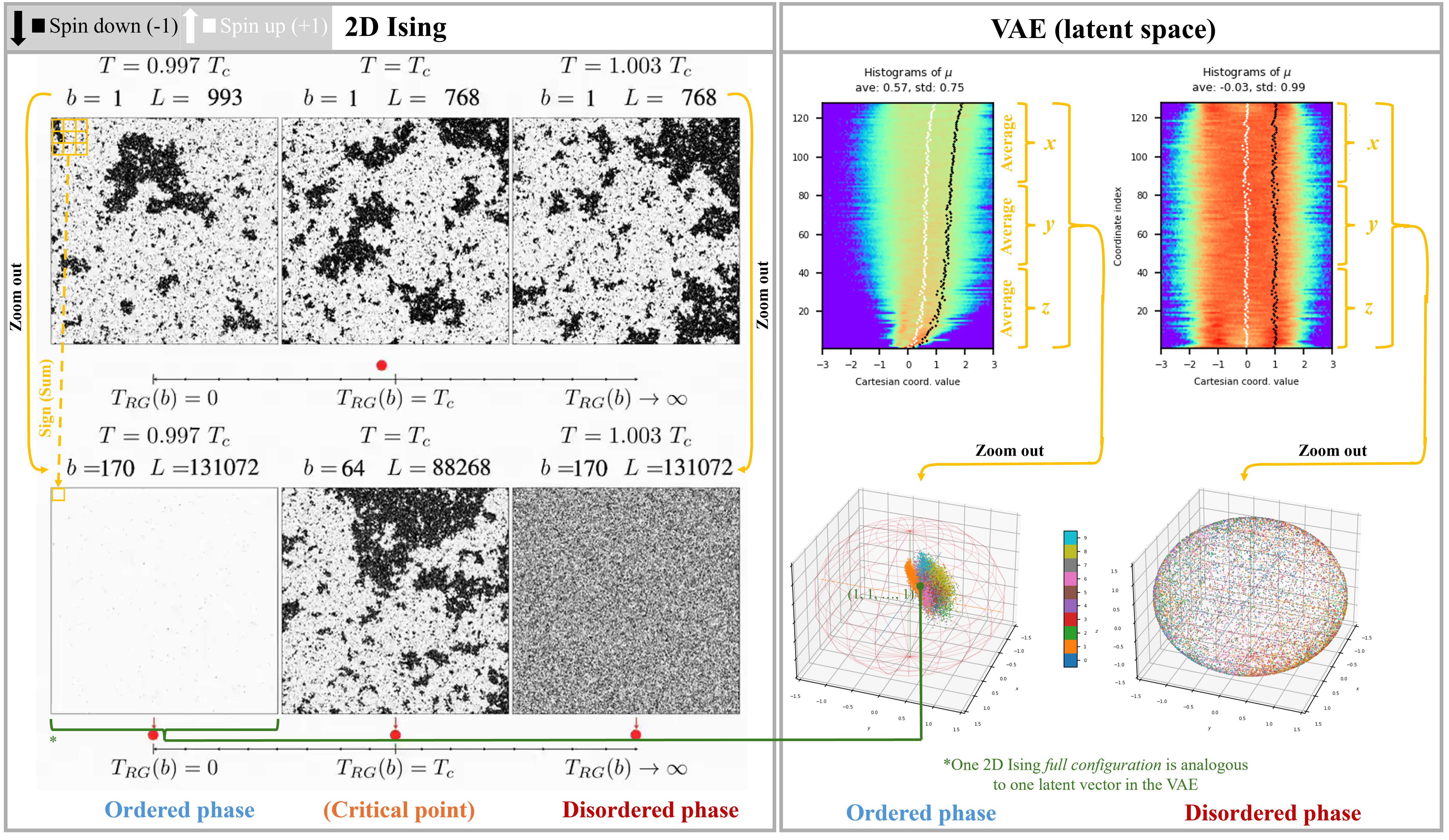}
    \caption{\textbf{Detecting phases via Block--spin--like coarse--graining of latent coordinates $\mu$} (neighboring--dimension averaging down to $\mathbb{R}^3$ from an initial $128-$dimensional latent). \textbf{Left:} a standard 2D Ising model Block--spin coarse--graining (dashed yellow line illustrating a $3\times3\to1$ reduction, $\mathrm{new\,spin}=sign(\Sigma^9 \,\mathrm{spins})$, i.e., $b=3$; these are not the real scale factors $b$ and length scale $L$ of the lattice in the displayed images, the actual values appear on top of each) showing its three distinct phases depending on the temperature $T$ of the initial configuration; this `Zooming out' process (since one loses \textit{local} details) makes the system look like one in a different, effective temperature $T_{RG}(b)$ (red dots). \textbf{Right:} Our version of Block--spin--like coarse--graining in latent space, which cleanly visualizes disorder$\to$order. Results from one MNIST run for illustrative purposes, with the one at the left in the full nested RSB by hyperspherical coordinates mode, while at the right we have the standard VAE with Gaussian prior $\mathcal{N}(0,I)$; a simple $k-$NN on the full latent reveals a classification accuracy of $>90\%$ in \textit{both} phases, but it is \textit{only} in the ordered one where the Block–-spin--like coarse–graining transformation reveals this clustering due to the stability of the samples in the semi-ordered phase under this type of transformations. See \ref{sec:spin-glass-glossary} for terminology.}
    \label{renorm_diags}
\end{figure}


Spin--glass theory offers a language for how collective behavior, phases, and rugged landscapes emerge from many weak, high--order interactions. Its mean--field formulations and the associated tools---overlaps, susceptibilities, landscape complexity, and renormalization---have been developed in physics \cite{MPV1987,CugliandoloKurchan1993,FranzParisi1995,FyodorovNadal2012,Fyodorov2016} and placed on rigorous footing in mathematics \cite{Talagrand2011,Panchenko2013,AuffingerBenArousCerny2013,AB13AOP,Subag2017}. In machine learning, classic connections include Hopfield networks and Boltzmann machines, and more recent work has compared deep--network loss surfaces and training dynamics to glassy systems \cite{Hopfield1982,AckleyHintonSejnowski1985,AmitGutfreundSompolinsky1985,Choromanska2015,BaityJesiICML2018,BaityJesiJSTAT2019,Dauphin2014}. Most analyses, however, either live in \emph{weight space} or use spin--glass ideas as qualitative metaphors.

\paragraph{Latent--centric viewpoint.}
We take a different route: we treat the \emph{latent space} of an autoencoder/VAE as the spin configuration space. For a fixed decoder snapshot and input $x$, the reconstruction term together with a simple prior on the latent mean $\mu$ induces a scalar \emph{latent energy} $H_x(\mu)$ on a high--dimensional hypersphere. In this dictionary, latent coordinates play the role of continuous spins; the prior acts as an external field; and the encoder’s evolution under training corresponds to a noisy gradient flow (Langevin--like) on the induced energy landscape. This framing allows us to port \emph{concrete} spin--glass observables to latent space and ask falsifiable, data--driven questions about \emph{phases} of $\underline{\text{\emph{representation}}}$ and to investigate their impact on standard ML/AI tasks (in this work, we mainly focus on two, Generation and Anomaly Detection, both in the context of VAEs).

\paragraph{What we show.}
The paper is divided in two main sections:
\begin{itemize}
    \item In the first one \ref{sec:encoder_spin_glass}-\ref{sec:spin_glass_tools}, we formalize the VAE-latent$\iff$spin–glass mapping, and then we import to \emph{latent--space} two key tools from spin--glass theory which are used to detect and see phase transitions in the latter (Block–spin coarse–graining and the overlap order parameter). Armed with these tools, we diagnose in latent space the presence of ordered, disordered, and edge-of-stability phases, and whose transitions we control thanks to a specific angular-based (hyperspherical coordinates) KLD-like term, where the prior is akin to the application of an external magnetic field in spin glasses. Crucially, we use this driver to make the system go to an edge-of-stability phase.
    
    \item In the second part \ref{results}, we show how this driving of the latent representations to this phase causes direct improvements on two key ML/AI tasks (Generation and Anomaly Detection).
\end{itemize} 

\paragraph{Position.}
We argue that high-dimensional latent spaces in AEs/VAEs should be treated as finite-size statistical-mechanical systems rather than merely as Euclidean embeddings. In particular, latent representations should be diagnosed by phase-sensitive observables---overlap distributions, susceptibilities, coarse-graining stability, and hyperspherical angular order parameters---before being evaluated only through downstream task metrics. Our central claim is that many failures of direct VAE generation and latent-space anomaly detection are symptoms of a disordered high-volume latent phase, while useful representations arise near an intermediate, partially ordered edge-of-stability regime.

 \paragraph{Glossary of spin glass terminology} For the benefit of the non-expert reader, we offer in the appendix an extensive glossary of spin glass terms and concepts used through the paper \ref{sec:spin-glass-glossary}. We strongly encourage its browsing.

\section{Related Work}

\paragraph{Statistical physics of spin glasses.}
The foundations of mean-field spin-glass theory and replica symmetry breaking were developed in the physics literature, culminating in the monograph of Mézard, Parisi, and Virasoro \citep{MPV1987}; see \cite{MontanariSen2024Friendly} for an introductory account. Nonequilibrium glassy dynamics, aging, and fluctuation–dissipation violations were analyzed by Cugliandolo and Kurchan \citep{CugliandoloKurchan1993}, while the Franz--Parisi potential provided a geometric, overlap-based effective potential linking metastability and landscape structure \citep{FranzParisi1995}. More recently, random-landscape methods have clarified topology changes (``topology trivialization'') and the counting of stationary points via Kac–Rice and random-matrix tools \citep{FyodorovNadal2012,Fyodorov2016}. See \cite{RosBenArousBiroliCammarota2019PRX} for an insightful visualization of the emergence of topological trivialization as the strength of a bias (e.g., an external magnetic field) increases.

\paragraph{Mathematical theory.}
On the rigorous side, Talagrand’s volumes \citep{Talagrand2011} and Panchenko’s monograph \citep{Panchenko2013} established the Parisi formula and the structure of the SK model. For spherical $p$-spins, the complexity of critical points and geometry of high-dimensional random fields were characterized in a series of works using Kac–Rice and GOE techniques \citep{AuffingerBenArousCerny2013,AB13AOP}. Subsequent developments describe the organization of low-temperature Gibbs measures and energy landscapes \citep{Subag2017}.

\paragraph{Historical ML/AI connections.}
Spin-glass ideas have influenced neural networks since the early days: Hopfield’s associative memory \citep{Hopfield1982}, Boltzmann machines \citep{AckleyHintonSejnowski1985}, and the Amit, Gutfreund, Sompolinsky program relating storage capacity and phases \citep{AmitGutfreundSompolinsky1985} are canonical references linking disordered statistical mechanics and learning.

\paragraph{Modern ML/AI: loss landscapes and glassy analogies.}
A line of work connects deep-network loss surfaces to spherical spin-glass Hamiltonians and analyzes their critical points \citep{Choromanska2015}. Beyond static analogies, Baity-Jesi, Ben~Arous, LeCun, Wyart, Biroli and collaborators compared training dynamics in deep nets with glassy systems, reporting flat directions and under/over-parameterized ``phases'' \citep{BaityJesiICML2018,BaityJesiJSTAT2019}. Related studies probe saddle geometry and Hessian spectra in deep models \citep{Dauphin2014}.

\paragraph{Relation of this work to spin-glass analyses of neural networks.}

Our approach is different in two key respects. First, we do not
model the \emph{weights} as spins and the training loss as a
weight-space Hamiltonian. Instead, we treat the \emph{latent
representations} of an autoencoder as spins and show that, for a
fixed decoder snapshot, the latent energy lies in the mixed spherical
$p$--spin universality class. Second, rather than using spin-glass
theory purely as an analogy or a source of qualitative metaphors, we
use it as a source of \emph{concrete predictions and observable diagnostics}
for the latent phase, and we verify these predictions empirically on
trained models.


\section{Autoencoder networks as latent spin-glass systems}
\label{sec:encoder_spin_glass}

We begin by recalling the basic objects of spin-glass theory and
then explain how we interpret a neural encoder network as a
high-dimensional spin system with a disordered Hamiltonian. In our
view, the latent coordinates play the role of spins, the neural
weights encode quenched disorder, and the training dynamics of the
encoder--decoder pair is naturally compared to Langevin dynamics in
an energy landscape. This section sets up the dictionary we will use
throughout the paper; \ref{sec:pspin_and_main_theorem} turns this into a precise
theorem via a Taylor expansion of the decoder.

\subsection{Classical spin systems and Gibbs measures}

A classical spin system consists of a large number
$N \gg 1$ of spins $s_i$ placed on a lattice or a graph. In the
simplest Ising case $s_i \in \{-1,+1\}$, while in the spherical
model one considers continuous spins
$s = (s_1,\ldots,s_N) \in \mathbb{R}^N$ constrained to the unit
sphere, $S^{N-1}$.
The state of the system is a configuration $s \in S^{N-1}$, and its
energy is given by a Hamiltonian $H(s)$; in spin glasses $H$ is a
random function of $s$ with quenched disorder coming from random
couplings.

Given $H$, the equilibrium statistics at inverse temperature
$\beta = 1/T$ are described by the Gibbs measure
\begin{equation}
  \label{eq:gibbs}
  \pi_\beta(s)
  \;=\;
  \frac{1}{Z(\beta)} \exp\bigl[-\beta H(s)\bigr],
\end{equation}
\begin{equation}\label{eq:part}
  Z(\beta) = \int_{S^{N-1}} \exp\bigl[-\beta H(s)\bigr] \, d\mu(s),
\end{equation}
where $\mu$ is the uniform measure on the sphere. At high
temperature $\beta \to 0^+$ the Gibbs measure approaches the uniform
distribution on $S^{N-1}$ (fully disordered phase). At low
temperature large portions of configuration space are suppressed and
typical spins concentrate on low-energy regions (ordered or glassy
phases).

A key dynamical fact is that the Gibbs measure is stationary for
Langevin dynamics, i.e.\ for the stochastic differential equation
on $S^{N-1}$:
\begin{equation}
  \label{eq:langevin}
  d s_t
  \;=\;
  - \nabla H(s_t)\, dt
  + \sqrt{2/\beta}\, dB_t,
\end{equation}
where $B_t$ is Brownian motion restricted to the sphere. Informally,
the gradient term performs steepest descent on the energy landscape,
while the noise term injects thermal fluctuations of intensity
$1/\beta$; together they drive the system toward the Gibbs
distribution~\eqref{eq:gibbs}.

\subsection{Neural encoders as energy-based models on the latent space}

We now describe how an encoder network can be viewed as defining a
spin-glass-like Hamiltonian on its latent space. Let
$\mathcal{X} \subset \mathbb{R}^{d_x}$ be the input space, and let
$\mathcal{Z} \subset \mathbb{R}^{d_z}$ be the latent space, with
$d_z$ large. A parametrized encoder--decoder pair consists of
\[
  E_\theta : \mathcal{X} \to \mathcal{Z},
  \qquad
  D_\theta : \mathcal{Z} \to \mathcal{X},
\]
where $\theta \in \mathbb{R}^P$ collects all learnable weights and
biases. For concreteness one may think of a VAE encoder outputting
a mean $\mu_\theta(x)$ and (optionally) a covariance, but here we
focus on the deterministic mean code
\[
  z_\theta(x) \;=\; E_\theta(x) \in \mathcal{Z}.
\]

We interpret each coordinate $z_i$ as a continuous spin, and the
latent code $z\in\mathcal{Z}$ as a spin configuration of dimension
$N=d_z$.

Given a reconstruction loss $\ell(x, \hat x)$ and a latent
regularization term $R(z)$ (which, in VAEs, comes from the KLD term
to the prior), we define for each fixed input $x$ an \emph{effective
latent energy}
\begin{equation}
  \label{eq:latent_energy}
  H_x(z; \theta)
  \;=\;
  \ell\bigl(x, D_\theta(z)\bigr)
  \;+\;
  R(z).
\end{equation}
For a given decoder snapshot $(D_\theta)$, this is a scalar function
on $\mathcal{Z}$; when restricted to a sphere
$S^{d_z-1} \subset \mathcal{Z}$ (e.g.\ after normalizing by
$\|z\|_2$) it plays exactly the role of a Hamiltonian $H(s)$ in
\eqref{eq:gibbs}, with the latent coordinates as spins.

In this view the encoder is an \emph{energy-based model}: for a fixed
parameter vector $\theta$, the latent space is equipped with an
energy function $H_x(\cdot;\theta)$, and the encoder map
$E_\theta(\cdot)$ learns to send each input $x$ to low-energy
configurations $z_\theta(x)$. The disorder in the Hamiltonian is
provided by the random initialization of $\theta$ and by the
complex, data-dependent evolution of $\theta$ during training. More details in \ref{sec:pspin_and_main_theorem}.

\subsection{KLD-like latent regularization term via hyperspherical coordinates}

We now introduce hyperspherical coordinates in the KLD formulation. We start with the Cartesian coordinates \((\mu_{i},\sigma_{i})\), given by the encoder, and transform these to their hyperspherical counterparts \((\overset{\mu}{r},\overset{\mu}{\varphi_{k}};\overset{\sigma}{r},\overset{\sigma}{\varphi_{k}})\) with $r$ a scalar and $k$ the index of the $n-1$ spherical angles. The KLD-like objective becomes for the angles $\varphi_k$,

\begin{align}
\text{KLD}_{\text{HSphCoords}}^{w/Prior}(\varphi_k) = \nonumber
\sum_{k=1}^{n-1} &\bigg( \alpha_{\sigma,k} \left(\mathbb{E}_{b}[\mathrm{cos}\,\overset{\sigma}{\varphi_{k}}] - a_{\sigma,k} \right)^{2} \nonumber+ \beta_{\sigma,k}\left(\sigma_{b}[\mathrm{cos}\,\overset{\sigma}{\varphi_{k}}]-b_{\sigma,k}\right)^{2} \nonumber\\
&+ \alpha_{\mu,k}\!\!\left(\mathbb{E}_{b}[\mathrm{cos}\,\overset{\mu}{\varphi_{k}}]-a_{\mu,k}\right)^{2}\!\!\!\! + \beta_{\mu,k}\left(\sigma_{b}[\mathrm{cos}\,\overset{\mu}{\varphi_{k}}]-b_{\mu,k}\right)^{2} \!\bigg),\!\!\! 
\end{align}
with the priors for the mean over the batch $a_{i,j}$, the standard deviation over the batch $b_{i,j}$, and the gains for each term $\alpha_{i,j},\,\beta_{i,j}$, for $i\in\{\sigma,\mu\}$ and $j\in\{1,...,n-1,r\}$. The prior values $a_{i,j}$ act as the intensity of applied external magnetic fields (see \ref{mfield}) but now on the angular directions. This will be key due to the first bullet point in section \ref{over} below.

\section{Spin-glass tools in latent-space and latent-phase diagnosis}
\label{sec:spin_glass_tools}


In this section we
recall several standard tools from spin-glass theory and explain how
we will apply them to our latent setting. Each tool gives rise to
concrete, testable predictions about the behaviour of trained
autoencoders as we vary a single control parameter (e.g.\ the
compression strength or an effective inverse temperature).

\subsection{Overlaps, order parameters, and phase structure}\label{over}

In spherical $p$--spin models, the basic two-replica observable is
the \emph{overlap} between two spin configurations $s,s'\in S^{N-1}$:
\begin{equation}
  q(s,s')
  \;=\;
  \frac{1}{N} \sum_{i=1}^N s_i s'_i
  \;=\;
  \frac{1}{N} \langle s, s' \rangle.
\end{equation}
If we draw two configurations independently (called `replicas') from the Gibbs measure
$\pi_{\beta,N}$ in~\eqref{eq:gibbs}, their overlap is a random
variable $q$ whose law $P_\beta(q)$ encodes the phase structure:
\begin{itemize}
  \item In a high-temperature paramagnetic phase (\textit{disorder}) one typically has
        $P_\beta(q)$ sharply concentrated near $q=0$ (no global
        alignment between replicas). This is equivalent to the \emph{angle} between replicas being close to $\boxed{\pi/2}$ (often called `almost orthogonality', and due to the \textit{high dimensionality} of the hypersphere and the nature of the uniform distribution on it in that case; cf. eqs.\ref{eq:gibbs},\ref{eq:part}), or their \emph{separation in Euclidean distance} being close to $\sqrt{2}$ (or $\sqrt{2}\times\sqrt{N}$, if the hypersphere is not normalized).
  \item In a low-temperature ferromagnetic phase (\textit{order}) one has
        a peak at $q\approx m^2>0$, where $m$ is the spontaneous
        magnetisation.
  \item In a \textit{glassy} phase with replica-symmetry breaking (RSB) the
        overlap distribution $P_\beta(q)$ becomes nontrivial (e.g.\
        \emph{multiple peaks} or a \emph{continuous support on an interval}),
        reflecting a hierarchy of pure states.
\end{itemize}
The overlap is therefore an order parameter: changes in the shape of
$P_\beta(q)$ signal phase transitions as the control parameter (here,
$\beta$) is varied.

In our latent setting we fix a dataset and a trained encoder $E_\theta$
and consider the normalised latent spins $ s(x)\in S^{N-1},$
for inputs $x$ drawn from some distribution (training or validation).
We define the \emph{latent overlap} between two datapoints $x,x'$ as
\begin{equation}
  \label{eq:latent_overlap}
  R(x,x')
  \;=\;
  \frac{1}{N} \bigl\langle s(x), s(x') \bigr\rangle,
\end{equation}
and denote by $P(R)$ the empirical distribution of $R(x,x')$ over
random pairs of datapoints.


In our setting with hyperspherical coordinates, we can set a prior for the $\varphi_k$ that \textit{forces the latent replica-samples away from the almost orthogonality of the disordered phase}, by taking $\boxed{a_{\mu,k}\neq0,\,\forall k}$. We call this hypervolume compression via magnetic fields in hyperspherical coordinates. Empirically, we track $P(R)$ and summary statistics such as
$\mathbb{E}[R]$ and $\mathrm{Var}(R)$ across training and across
different compression strengths, and identify phase transitions via
qualitative changes in these overlap-based order parameters. See Fig.\ref{rep_angle_hists}.

\begin{figure}[!h]
    \centering
    \includegraphics[width=0.6\linewidth]{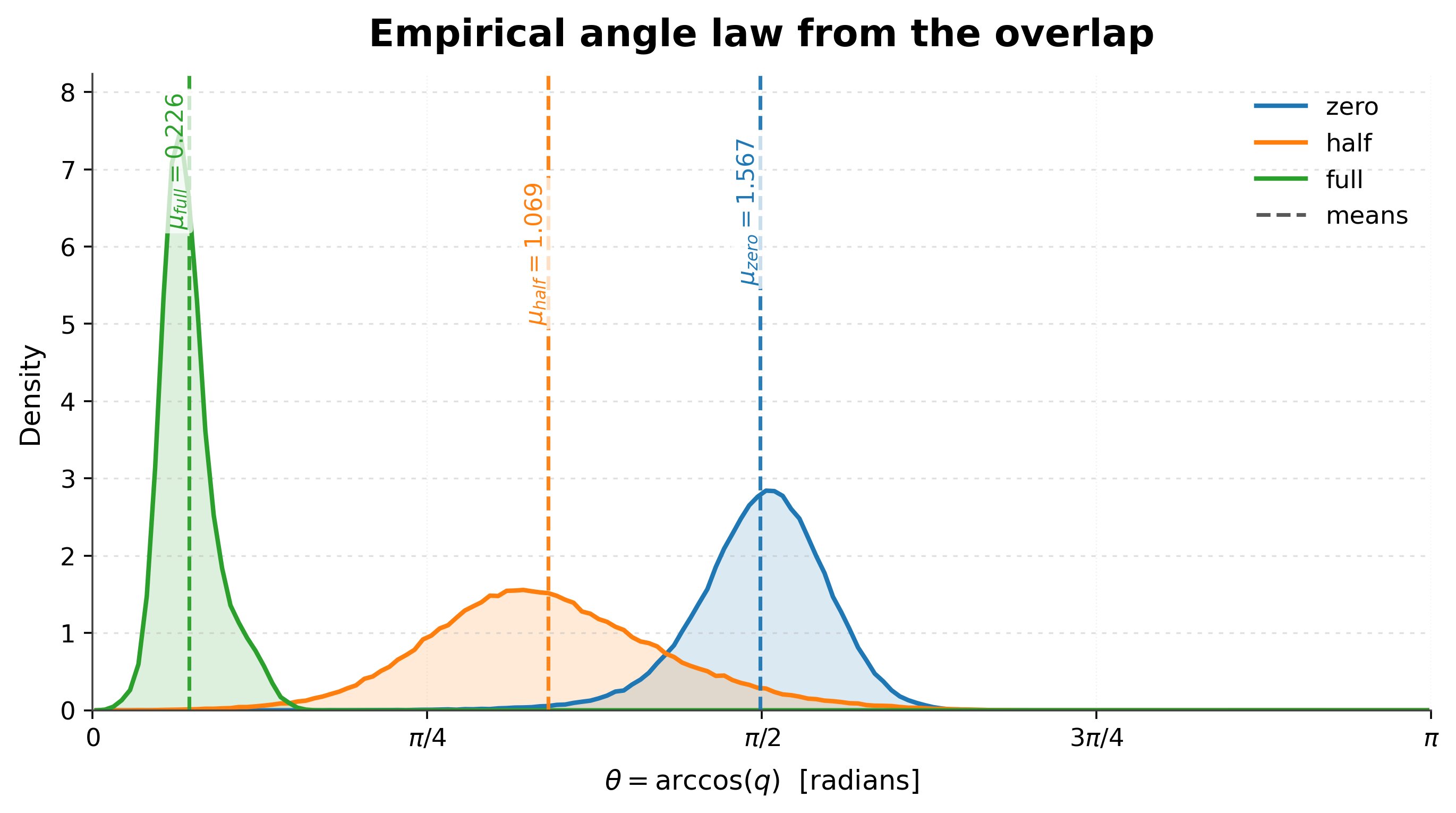}
    \caption{\textbf{Replica angle}. From right to left, the direction of the applied magnetic field in each experiment is shifted from the equator (`zero'--compression mode), to $(1,1,\dots,1)$ (`half'), and then towards the north pole (`full') to bias the transition during training (first two, right to left, are the experiments of Fig.\ref{renorm_diags}).}
    \label{rep_angle_hists}
\end{figure}

\subsection{Renormalisation-group and block-spin transformations}

Renormalisation-group (RG) techniques study how the statistics of a
spin system change under coarse-graining. A standard construction
for Ising or spherical models is the \emph{block-spin} transform \cite{Wilson1979ManyScales,Yeomans1992}:
partition the spins into blocks of size $b$, and define a new
coarse-grained spin variable in each block by averaging the original
spins. Fixed points of this transformation correspond to phases, and
critical points are characterised by scale invariance and unstable
directions of the RG flow.

In our latent setting, we can implement a simple block-spin
transformation by grouping neighbouring latent coordinates into a
small number of blocks and averaging them. Concretely, for
$N=d_z$ latent coordinates and a partition of $\{1,\dots,N\}$ into
$K$ disjoint blocks $B_1,\dots,B_K$, we define
\begin{equation}
  \label{eq:block_spin_latent}
  \tilde z_k(x)
  \;=\;
  \frac{1}{|B_k|}
  \sum_{i\in B_k} \mu_{\theta,i}(x),
  \qquad
  k = 1,\dots,K,
\end{equation}
and normalise if desired. For $K=3$ this defines an explicit map
$\mathcal{Z}\to\mathbb{R}^3$ which can be visualised.

Now, in our case, if the latent codes lie in a disordered phase, the coarse-grained
variables $\tilde z(x)$ should form a very diffuse, nearly isotropic
cloud in $\mathbb{R}^3$, since the coarse graining is highly sensitive to the specific configuration of different samples. In contrast, if the latent codes are in an ordered or partially ordered phase, then the cloud of block spins $\tilde z(x)$ should form one or a few tight clusters in $\mathbb{R}^3$ that are visually apparent (islands
on the hypersphere): the cluster assignments of datapoints should be \emph{stable} under coarse graining for different samples of a same clustering class. See Fig.\ref{renorm_diags}.

\subsection{Nested ordering in hyperspherical coordinates}

Classical spin-glass analyses on the sphere focus on order parameters
expressed in terms of overlaps $q(s,s')$. In our latent
construction we parametrise spins $s\in S^{N-1}$ by \emph{hyperspherical
coordinates} $(r,\phi_1,\ldots,\phi_{N-1})$ with $r\equiv\sqrt{N}$
and \emph{design the prior so as to control not just one angle but the
\textbf{entire vector of angles}}. This yields a hierarchy of
\emph{coordinate-wise order parameters}:
\begin{equation}
  m_k
  :=
  \mathbb{E}\bigl[\cos \phi_k\bigr],
  \;
  v_k
  :=
  \mathrm{Var}\bigl(\cos \phi_k\bigr),
  \;
  k=1,\dots,N-1.
\end{equation}
Each $(m_k,v_k)$ describes an ordering transition on the \emph{sub-}sphere
$\boxed{S^{N-1-k}}$ embedded in $S^{N-1}$. We call this $\underline{\text{\textbf{k-NOT}: \emph{k-Nested Order Transition}}}$. See \cite{Ascarate2026VAEHypersphericalAD} for implementation details.


\paragraph{Multi-step nested ordering.}
As we increase the compression strength or effective inverse
temperature, the spin-glass analogy predicts that different angular
coordinates may \emph{order at different scales}, leading to a
multi-step hierarchy. This \emph{nested} ordering pattern is different from the usual
Parisi replica-symmetry-breaking picture, where the hierarchy is
expressed in the structure of $P(q)$. Here, the hierarchy lives in
the marginal distributions of the hyperspherical coordinates
themselves. In our experiments we will track $(m_k,v_k)$ as functions
of the compression hyperparameter and identify distinct ``kinks'' or
plateaux corresponding to successive ordering transitions. See Fig.\ref{hnested_rep_angle_hists}.

\begin{figure}[!h]
    \centering
    \includegraphics[width=0.5\linewidth]{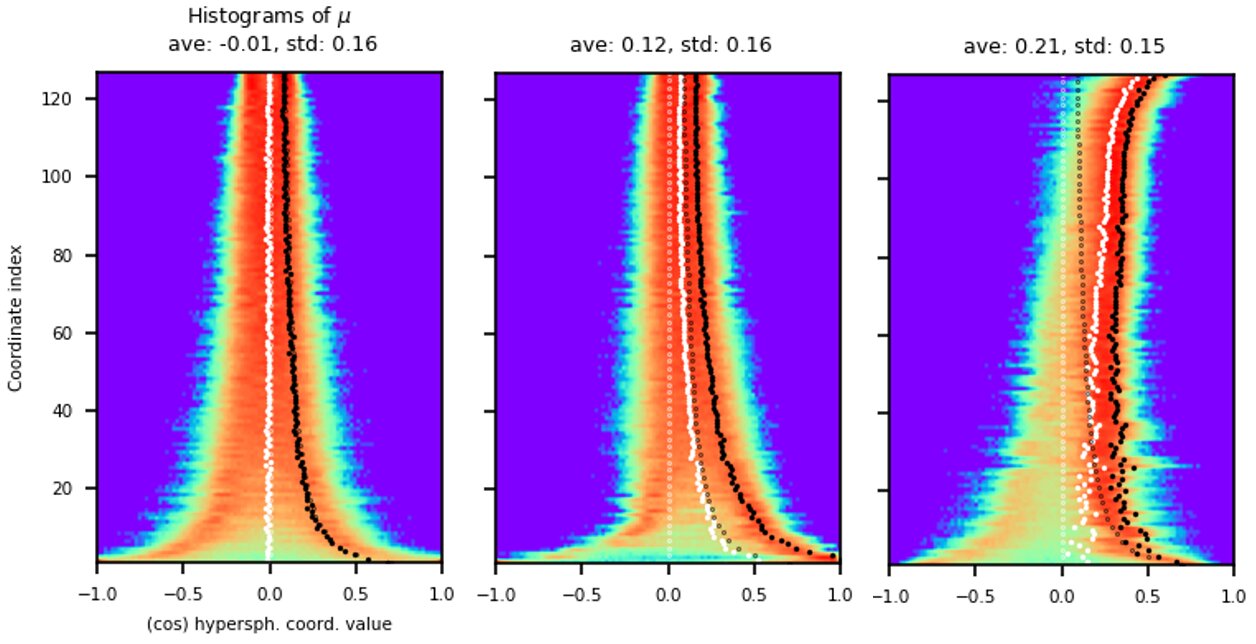}
    \caption{Each horizontal slice at some vertical index value shows the color coded histogram (red, high density; blue, low density) for the range of the coordinate of that index; the vertical axis stacks all the histograms for all the dimensions (in this example, $128$). The white dots represent the mean and the black dots represent the standard deviation of the corresponding histogram. The numbers on top are the total mean and standard deviation of all these previous values for all dimensions taken together. From left to right, the direction of the applied magnetic field in each experiment is shifted from the equator, to $(1,1,\dots,1)$, and then towards the north pole to bias the transition during training (first two, left to right, are the experiments of Fig.\ref{renorm_diags}).}
    \label{hnested_rep_angle_hists}
\end{figure}


\subsection{Susceptibility and critical behaviour}\label{P2}

In spin-glass systems, the response of the overlap or magnetisation
to changes in temperature or external field is captured by
susceptibilities. For example, the linear response of the
magnetisation $m$ to a small change in external field $h$ defines
the magnetic susceptibility $\chi = \partial m / \partial h$, which
typically diverges or peaks near critical points. In practice, one
often uses equivalent fluctuation-dissipation relations, where the
variance of the overlap or magnetisation under the Gibbs measure
plays the role of a susceptibility-like quantity.

In our latent context, we can define an \emph{overlap susceptibility}
associated to a control parameter $\alpha$ by
\begin{equation}
  \chi_{\mathrm{ovl}}(\alpha)
  \;:=\;
  \mathrm{Var}_{x,x'}\bigl[ R_\alpha(x,x') \bigr],
\end{equation}
where $R_\alpha$ is the latent overlap computed for a model trained
with hyperparameter $\alpha$. Heuristically, $\chi_{\mathrm{ovl}}$
should peak where the latent representation undergoes a qualitative
reorganisation (e.g.\ when islands appear or disappear). Indeed, as we vary
$\alpha$ from a weak-compression regime to a strong-compression
regime there is a distinguished value $\alpha^\star$ at which
$\chi_{\mathrm{ovl}}(\alpha)$ attains a maximum. We interpret
$\alpha^\star$ as an \emph{edge-of-stability} point for the latent
spin system: just before the topological structure of the energy
landscape simplifies (e.g.\ many small basins disappear), the
overlap fluctuations are maximal. Representation-quality metrics tend to peak near this same $\alpha^\star$. See Fig.\ref{AD_edge_stab} and \ref{ADGZ}.

\subsubsection{Topological trivialisation}

When an external field or other control parameter is increased, one
observes a \emph{topological trivialisation} transition: the energy
landscape goes from having many local minima (glassy phase) to
having only a single global minimum (trivial phase), and the
complexity collapses to zero (see \ref{topodiagra} for a figure and \citep{RosBenArousBiroliCammarota2019PRX} for more details). 
As we increase the compression strength or effective $\beta$, we
expect:
\begin{itemize}
  \item In a weak-compression regime: many shallow latent basins and
        fragmented low-energy sets.
  \item Near a critical value: a rapid drop in the estimated number
        of distinct low-energy basins.
  \item Beyond the critical value: an almost convex latent landscape
        with a single dominant basin (topologically trivial phase).
\end{itemize}

\subsubsection{Reaching the edge of topological trivialisation in practice}

When using the loss in hyperspherical coordinates \cite{Ascarate2025HypersphericalVAE_IJCNN}, we use an annealing schedule \citep{Fu2019CyclicalAnnealing} for the gain $\beta$ of the KLD-like loss, consisting of an initial stage which increases proportionally with \(\sqrt{\text{epoch}}\) for the first \(100\) epochs, and is constant afterwards. In our experiments on generation, for example, this was necessary because we observed that too much compression of the volume was detrimental to the performance (particularly for the MSE, while the self-FID instead still tends to improve), while a strong compression was still necessary at the initial stage (for a good self-FID), see \ref{gen} for an explicit full example from our runs. This shows true edge-of-stability/critical point behavior/regime. The total training was \(300\) epochs in all cases. Thus, when implementing this training process, and by a close monitoring of all of the previous diagnostic tools of spin glass origin, we were able to systematically shift our trained models towards the unstable region of interest, where the phase is at the edge of topological trivialization.


\section{Anomaly Detection and Generation Results}\label{results}

\subsection{Generation}
\label{subsec:generation-evidence}

The latent ordering picture is consistent with the generative improvements previously observed when the same hyperspherical compression mechanism is applied to VAEs. In \citet{Ascarate2025HypersphericalVAE_IJCNN}, the central difficulty is the standard high-dimensional VAE trade-off: increasing latent dimension improves reconstruction, but direct sampling from the Gaussian prior becomes progressively worse because the latent mass spreads over the high-volume equatorial region of the hypersphere. The proposed hyperspherical-coordinate loss compresses the latent representation into a lower-volume island on the hypersphere while preserving enough angular spread for reconstruction. This changes the reconstruction--generation trade-off rather than merely shifting one metric at the expense of the other.

On CIFAR-10, the compressed VAE improves the sampling quality by approximately \emph{ten self-FID points} relative to the best comparable standard VAE configuration, while simultaneously improving reconstruction by roughly \emph{one MSE point}. The self-FID was computed from $10{,}000$ generated samples against the reconstructed test distribution, so the metric specifically probes whether random latent samples decode into the learned data manifold rather than into visually meaningless regions. The same qualitative and quantitative pattern is reproduced on CelebA64: hyperspherical compression moves the latent away from the sparse equatorial regime and yields a simultaneous improvement in generation quality and reconstruction fidelity. Thus, the generative results give downstream evidence that the ordered latent phase is not merely a visualization artifact: the compressed island is dense enough to make random decoding meaningful, yet not so collapsed that reconstruction quality is destroyed, see \ref{gen}.

From the viewpoint of the present paper, these results support the interpretation that the useful regime is an intermediate ordered phase. A standard VAE leaves the latent close to a high-volume, almost-uniform hyperspherical distribution, which is geometrically sparse and poor for direct generation. Excessive compression would instead over-align the latent and harm reconstruction. The observed improvement occurs between these extremes: the latent codes become sufficiently ordered to form a dense sampling region, while retaining enough internal structure to reconstruct data accurately. This is precisely the type of edge-of-stability regime diagnosed by the overlap, susceptibility, and coarse-graining observables in the present work.

\subsection{Anomaly Detection}
\label{subsec:ad-evidence}

The same mechanism also produces substantial anomaly-detection gains in both fully unsupervised and conditional OOD settings. In \citet{Ascarate2026VAEHypersphericalAD}, anomalies are detected in the compressed latent space using a simple $k$-nearest-neighbor score on encoder means. The method is deliberately minimal: after training the VAE on nominal or ID data, a test point is encoded and scored by its average distance to the $k$ nearest training latents. The improvement therefore comes from the geometry of the latent representation itself, not from a specialized anomaly classifier.

In the fully unsupervised setting, the method was evaluated on two real-world image datasets: Mars Rover Mastcam and Galaxy Zoo 64. On Mars Rover, the compressed VAE improves AUROC by about \emph{10 percentage points} over the standard VAE baseline, rising from roughly $0.66$ for VAE+$k$NN to roughly $0.76$ for Comp.VAE+$k$NN. On Galaxy Zoo 64, it improves AUROC by about \emph{5 percentage points}, from roughly $0.74$ to roughly $0.79$. These datasets are important because they are not synthetic leave-one-class-out image benchmarks: Mars Rover contains unusual planetary imagery, while Galaxy Zoo uses human-labelled odd galaxies, making the anomaly notion closer to the intended real-world setting.

The OOD results show the same pattern at larger scale. With CIFAR-10 as ID, the compressed VAE was tested against standard far-OOD datasets including SVHN, Places365, LSUN-Crop, LSUN-Resize, iSUN, and Textures. In this far-OOD setting, the method lowers FPR95 by roughly \emph{ten points} relative to the relevant non-contrastive baselines, while remaining competitive in AUROC. The near-OOD setting is more stringent: CIFAR-100 is semantically close to CIFAR-10 and therefore lies closer to the ID support. There, the compressed VAE gives a much larger FPR95 improvement, around \emph{thirty points}, while retaining comparable AUROC. Finally, in the more complex near-OOD experiment with Imagenette as ID and semantically close ImageNet classes as OOD, the method again improves FPR95 by about \emph{ten points} over the implemented $k$NN baseline and also outperforms the vMF subcase obtained by compressing only one hyperspherical angle.

These AD gains are directly aligned with the latent spin-glass interpretation. In the standard VAE, both normal and anomalous samples can occupy a broad, high-volume latent region, making nearest-neighbor distances less discriminative. Hyperspherical compression instead creates a dense island for nominal data; anomalies tend to fall into shallower or more weakly populated regions of the latent landscape, increasing their distance to the nominal island. The fact that this works for both fully unsupervised real-world anomalies and near/far OOD benchmarks suggests that the ordered latent phase improves not only sample generation but also the separation geometry required for anomaly detection.

\begin{figure}[!h]
    \centering
    \includegraphics[width=1\linewidth]{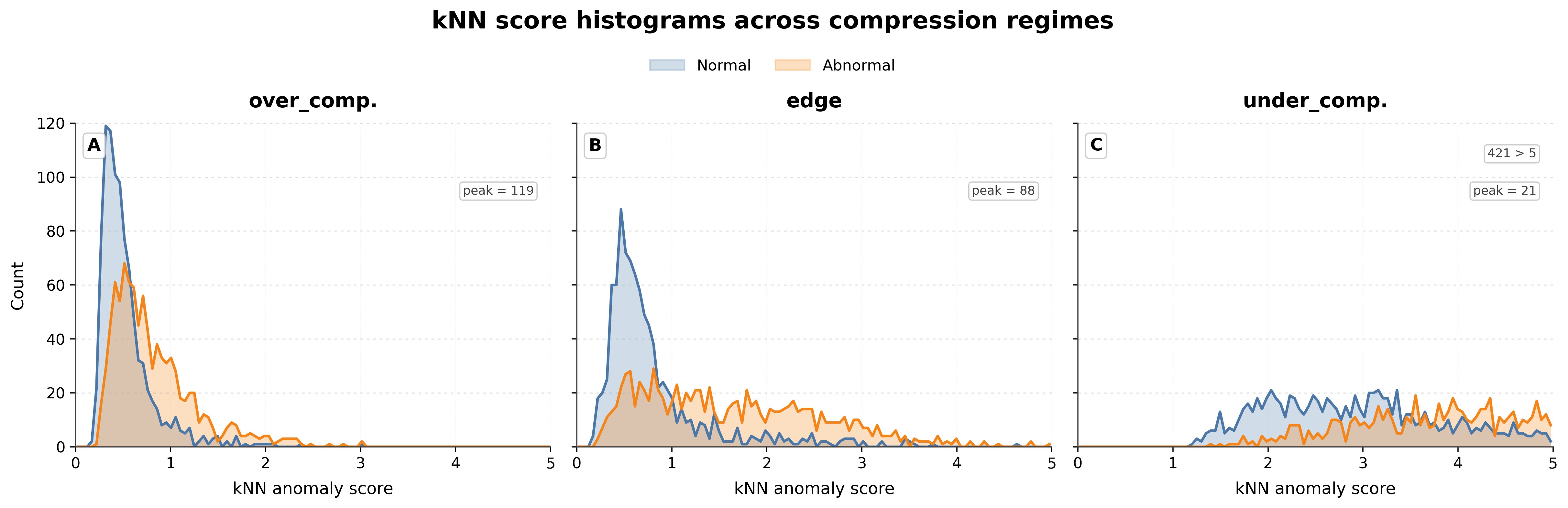}
    \caption{\textbf{At the edge of topological trivialization}. The $k-$NN method on latent space for anomaly detection is natural in this case, since it directly measures the Euclidean distances between replicas (cf.\ref{over}). In this way, the histograms for the score display quite directly the law of the overlap $P(R)$. At the edge of stability (\textbf{B}), both topologically trivial and continuous RSB phases are observed at the same time, the first in the normal data (class seen and volume-compressed during training of the VAE) and the second for the more energetic anomalies (\textit{not} seen in training). Left (\textbf{A}), when the intensity of the field is augmented, both classes topologically trivialize into a single peak; right (\textbf{C}), when it is decreased, both classes enter rugged RSB phases). This delicate instability of the configuration in (\textbf{B}) shows true edge-of-stability/near-critical point behavior. Experiments on the Galaxy Zoo dataset, which is an adequate dataset to check this since it contains anomalies varying in a continuous degree, from almost normal to very abnormal, so that the continuous full RSB region should be fully populated/covered by them. See \ref{ADGZ} for some other details.}
    \label{AD_edge_stab}
\end{figure}

\section{Conclusion}
The latent diagnostics suggest concrete knobs for practitioners: as argued in this paper, architectural choices (in particular, strength and direction of a latent prior) have \textit{predictable} signatures in overlaps and susceptibilities, offering a principled way to tune models toward regimes that are beneficial for representation quality.

\clearpage
\newpage

\bibliographystyle{plainnat}
\bibliography{references}

\newpage
\appendix

\section{Glossary of Spin-Glass Terms}
\label{sec:spin-glass-glossary}

\subsection*{---------------------------- 2D Ising Model and Block-Spin Renormalization Terms in Fig. \ref{renorm_diags} ----------}

\begin{description}[leftmargin=3.2cm, style=nextline]

\item[2D Ising model]
A classical lattice spin model in which each site $i$ of a two-dimensional grid carries a binary spin
\[
s_i\in\{-1,+1\}.
\]
The standard nearest-neighbor Hamiltonian is
\[
H(s)
=
-J\sum_{\langle i,j\rangle}s_i s_j
-h\sum_i s_i,
\]
where $J$ controls spin-spin coupling and $h$ is an external magnetic field.

\item[Square lattice]
The usual two-dimensional grid on which the 2D Ising model is defined. Each spin interacts with its nearest horizontal and vertical neighbors.

\item[Nearest neighbors]
Pairs of lattice sites $\langle i,j\rangle$ connected by one lattice edge. In the square-lattice Ising model, each interior spin has four nearest neighbors.

\item[Coupling constant $J$]
The parameter controlling whether neighboring spins prefer to align or anti-align. If $J>0$, aligned neighbors lower the energy and the model is ferromagnetic. If $J<0$, anti-aligned neighbors are favored and the model is antiferromagnetic.

\item[Ferromagnetic interaction]
A coupling that favors equal neighboring spins:
\[
s_i=s_j.
\]
In the 2D Ising model this corresponds to $J>0$.

\item[Antiferromagnetic interaction]
A coupling that favors opposite neighboring spins:
\[
s_i=-s_j.
\]
This corresponds to $J<0$ and can create alternating spin patterns.

\item[External field $h$]
A term biasing all spins toward one direction. Positive $h$ favors $s_i=+1$; negative $h$ favors $s_i=-1$.

\item[Magnetization]
The average spin,
\[
m=\frac{1}{N}\sum_{i=1}^N s_i.
\]
It is the standard order parameter for the ferromagnetic Ising transition. In a disordered phase, $m\approx 0$; in an ordered phase, $|m|>0$.

\item[Spin domain]
A connected region of the lattice where most spins have the same sign. Below the critical temperature, large domains appear and eventually dominate the lattice.

\item[Domain wall]
The boundary between neighboring regions of opposite spin. Domain walls are high-energy objects in the ferromagnetic Ising model because spins across the boundary are misaligned.

\item[Temperature $T$]
A parameter controlling thermal randomness. High $T$ makes spins fluctuate strongly and produces disorder; low $T$ suppresses fluctuations and favors ordered domains.

\item[Inverse temperature $\beta$]
The reciprocal temperature, usually $\beta=1/(k_B T)$ or with $k_B=1$, $\beta=1/T$. Increasing $\beta$ corresponds to cooling the system.

\item[Critical temperature $T_c$]
The temperature at which the 2D Ising model undergoes a continuous phase transition from a disordered phase to an ordered ferromagnetic phase.

\item[Critical point]
The point $T=T_c$ where the correlation length diverges, fluctuations occur at all scales, and the system becomes scale-invariant.

\item[Disordered phase]
The high-temperature phase $T>T_c$. Spins fluctuate almost independently, domains are small, and the average magnetization is close to zero.

\item[Ordered phase]
The low-temperature phase $T<T_c$. Large aligned domains form and the magnetization becomes nonzero in the thermodynamic limit.

\item[Correlation function]
A statistic measuring how strongly two spins are related as a function of their separation:
\[
C(r)=\langle s_i s_{i+r}\rangle-\langle s_i\rangle\langle s_{i+r}\rangle.
\]
It decays rapidly away from criticality and slowly at criticality.

\item[Correlation length]
The characteristic distance over which spin correlations decay. It is finite away from the critical point and diverges at $T_c$.

\item[Scale invariance]
A property at criticality where the system looks statistically similar across many length scales. This is why renormalization-group methods are natural near critical points.

\item[Coarse-graining]
A procedure that replaces many microscopic variables by fewer effective variables. In the Ising model, this means replacing groups of nearby spins by a block spin.

\item[Block]
A small local patch of the lattice, for example a $2\times 2$ or $3\times 3$ square of spins. RG transformations act by replacing each block with a single effective spin.

\item[Block spin]
The effective spin assigned to a block of microscopic spins. A common rule is majority vote:
\[
S_B=\operatorname{sign}\!\left(\sum_{i\in B}s_i\right),
\]
or an average followed by normalization. This is the Ising analogue of the latent block-averaging used in our paper.

\item[Majority rule]
A block-spin rule where the new spin takes the sign of the majority of spins inside the block. If the block contains more $+1$ than $-1$ spins, then $S_B=+1$; otherwise $S_B=-1$.

\item[Decimation]
A coarse-graining method where one removes part of the microscopic degrees of freedom, e.g. keeping only every other spin, and absorbs their effect into renormalized couplings.

\item[Renormalization-group transformation]
A map from one model description to another after coarse-graining:
\[
(J,h,T)\longmapsto (J',h',T').
\]
It describes how effective couplings change when short-scale degrees of freedom are integrated out or averaged.

\item[RG flow]
The trajectory obtained by repeatedly applying the renormalization-group transformation:
\[
(J,h,T)
\to
(J',h',T')
\to
(J'',h'',T'')
\to
\cdots.
\]
Different phases correspond to different long-run destinations of this flow.

\item[Fixed point]
A point in parameter space that is unchanged by the RG transformation. Critical points are usually associated with nontrivial fixed points, while ordered and disordered phases flow toward stable phase fixed points.

\item[Stable fixed point]
A fixed point that nearby RG trajectories flow toward. Stable fixed points represent phases, such as the ordered low-temperature phase or the disordered high-temperature phase.

\item[Unstable fixed point]
A fixed point with at least one direction in parameter space that flows away. The critical point is unstable along the temperature direction: a small move above or below $T_c$ flows to different phases.

\item[Relevant direction]
A perturbation that grows under RG iteration. Relevant directions determine what must be tuned to reach criticality, such as temperature in the 2D Ising model.

\item[Irrelevant direction]
A perturbation that shrinks under RG iteration. Irrelevant details disappear at large scales, explaining universality.

\item[Marginal direction]
A perturbation that is neither clearly growing nor shrinking at leading order. Marginal directions often require higher-order analysis.

\item[Universality]
The fact that many microscopically different systems share the same critical behavior. In RG language, they flow to the same critical fixed point.

\item[Universality class]
A family of systems with the same critical exponents and scaling laws. The 2D Ising universality class includes many systems with scalar $\mathbb{Z}_2$ symmetry and short-range interactions.

\item[Critical exponent]
A number describing how a physical quantity diverges or vanishes near criticality. For example,
\[
\xi_{\mathrm{corr}}\sim |T-T_c|^{-\nu}
\]
defines the correlation-length exponent $\nu$.

\item[Scaling law]
A power-law relation near criticality, such as the divergence of susceptibility or correlation length as $T\to T_c$.

\item[Susceptibility]
The response of magnetization to a small external field:
\[
\chi=\frac{\partial m}{\partial h}.
\]
It becomes large near the critical point because the system is highly sensitive to perturbations.

\item[Partition function]
The normalizing sum over all spin configurations:
\[
Z=\sum_{\{s_i\}}\exp[-\beta H(s)].
\]
It encodes the thermodynamics of the model.

\item[Free energy]
The thermodynamic potential
\[
F=-\frac{1}{\beta}\log Z.
\]
Non-analytic behavior of the free energy in the infinite-size limit signals a phase transition.

\item[Thermodynamic limit]
The limit where the number of spins $N$ goes to infinity. True phase transitions occur only in this limit; finite systems show rounded versions of the transition.

\item[Finite-size effects]
Deviations from infinite-system behavior due to finite lattice size. In finite lattices, susceptibilities peak but do not diverge.

\item[Finite-size scaling]
A method for extracting critical behavior by studying how observables change with system size.

\item[Real-space RG]
A renormalization method performed directly in physical space by grouping nearby spins into blocks. Block-spin transformations are a canonical real-space RG procedure.

\item[Length scale $L$]
The spatial resolution at which the system is described. Coarse-graining increases the effective length scale by replacing many microscopic spins with one effective spin.

\item[Scale factor $b$]
The factor by which length scale increases in one RG step. For example, grouping $2\times 2$ blocks in a square lattice corresponds to $b=2$.

\item[Effective Hamiltonian]
The Hamiltonian after coarse-graining. It has the form
\[
H'(S)=-J'\sum_{\langle B,B'\rangle}S_BS_{B'}-\cdots,
\]
where $S_B$ are block spins and the dots indicate that new interaction terms may be generated.

\item[Renormalized coupling]
A coupling parameter, such as $J'$, after one RG step. Even if the original model has only nearest-neighbor interactions, coarse-graining can generate longer-range and higher-order terms.

\item[Blocking]
The act of partitioning microscopic variables into local groups. In the paper's latent analogy, blocking corresponds to partitioning latent coordinates into groups and averaging them.

\item[Spin alignment]
The tendency of neighboring spins to take the same sign. Under block-spin RG, aligned regions remain stable and produce coherent block spins.

\item[Critical fluctuations]
Large fluctuations appearing near $T_c$ across many length scales. In block-spin pictures, these appear as domains of many different sizes.

\item[Self-similarity]
The visual and statistical recurrence of similar structures across scales. Critical Ising configurations are approximately self-similar.

\item[Latent block-spin analogue]
The paper's analogue of Ising block spins: group latent coordinates $B_k$ and form
\[
\tilde z_k(x)
=
\frac{1}{|B_k|}
\sum_{i\in B_k}\mu_{\theta,i}(x).
\]
This is not a physical-space block, but it is structurally analogous as a coarse-graining of microscopic coordinates into effective variables.

\item[Resolution, `Zoom in/out']
The level of detail retained after coarse-graining. In histogram or latent-space visualizations, changing the resolution is analogous to changing the scale at which phase structure is inspected.

\item[RG-style]
A qualifier meaning that we borrow the operational logic of renormalization---coarse-grain, rescale, and check stability of structure---without claiming a full exact RG transformation for the neural network.

\item[Coarse-grained cluster stability]
The persistence of latent clusters after block-spin averaging or after changing the blocking scheme. Stable clusters indicate ordered structure that survives reduction of degrees of freedom.

\end{description}

\subsection*{-------------------------------------- Spin-Glass Terms in General -------------------------------------------}

\begin{description}[leftmargin=2.8cm, style=nextline]

\item[Spin]
A microscopic degree of freedom of the system. In this paper, the analogue of a spin is a latent coordinate, or after normalization, a coordinate of the latent vector
$s(x)=\mu_\theta(x)/\|\mu_\theta(x)\|$.

\item[Spin configuration]
A full assignment of all spins. In our setting, a latent vector $s\in S^{N-1}$ is interpreted as a spin configuration.

\item[Hamiltonian]
The energy function of the system. Lower Hamiltonian values correspond to more likely or more stable configurations. In our latent setting, the Hamiltonian is the latent energy
\[
H_x(\mu)=\ell(x,D_{\theta^\star}(\mu))+\lambda \Phi(\mu),
\]
i.e., reconstruction cost plus latent prior.

\item[Latent energy]
The autoencoder-induced Hamiltonian on latent space. For a fixed decoder and input $x$, it measures how costly it is to represent $x$ by a latent code $\mu$.

\item[Spherical model]
A spin model where the spin vector is continuous and constrained to a sphere,
\[
s\in S^{N-1},\qquad \|s\|^2=N
\]
or, after normalization, $\|s\|=1$. This is the closest classical spin-glass analogue of a continuous VAE latent.

\item[$p$--spin interaction]
An interaction involving $p$ spins at once, e.g.
\[
J_{i_1\cdots i_p}s_{i_1}\cdots s_{i_p}.
\]
For $p=2$ this is a pairwise interaction; for $p\ge 3$ it produces increasingly high-order couplings and more rugged landscapes.

\item[Pure $p$--spin model]
A model whose Hamiltonian contains only one interaction order $p$. For example, a pure $3$--spin model contains only cubic spin products.

\item[Mixed $p$--spin model]
A model containing a sum of several interaction orders,
\[
H(s)=\sum_{p\ge 1}\beta_p H_p(s).
\]
The coefficients $\beta_p$ are the mixture weights. In the classical theory, these weights must decay sufficiently fast so that the infinite mixture is well-defined.

\item[Mixture profile $\xi$]
The real function
\[
\xi(t)=\sum_{p\ge 1}\beta_p^2 t^p
\]
that encodes the mixture weights of a mixed $p$--spin model. Functionally, the important regularity condition is that this series has radius of convergence strictly larger than one, equivalently that the high-$p$ weights decay geometrically.

\item[Mean-field scaling]
The normalization of random couplings with $N$ so that the total energy remains extensive, usually of order $N$. In mixed $p$--spin models, the $p$--spin couplings are scaled like $N^{-(p-1)/2}$.

\item[External magnetic field]
A bias that favors alignment of spins with a preferred direction. In our latent model, a prior centered at $m$,
\[
\Phi(\mu)=\frac12\|\mu-m\|^2,
\]
acts as an external field because, on the sphere, it contributes a linear term $-\langle m,\mu\rangle$.

\item[Gibbs measure]
The probability distribution over configurations at inverse temperature $\beta$:
\[
\pi_\beta(s)\propto \exp[-\beta H(s)].
\]
Low-energy states receive more probability mass. In our paper, this motivates viewing trained latent codes as samples biased toward low-energy regions.

\item[Inverse temperature $\beta$]
A parameter controlling concentration around low-energy states. Large $\beta$ means low temperature and strong concentration; small $\beta$ means high temperature and broad exploration.

\item[Langevin dynamics]
A noisy gradient flow of the form
\[
ds_t=-\nabla H(s_t)\,dt+\sqrt{2/\beta}\,dB_t.
\]
It combines descent in the energy landscape with thermal noise. We use it as an analogy for the noisy evolution of latent representations induced by SGD.

\item[Quenched disorder]
Randomness in the Hamiltonian that is treated as fixed while the system evolves. In neural networks, the trained or temporarily frozen decoder weights play the role of quasi-quenched disorder.

\item[Replica]
An independent sample from the same Gibbs distribution. In ML terms, one can think of two data points, or two latent codes drawn from the same trained representation distribution, as empirical replicas.

\item[Overlap]
The basic similarity measure between two spin configurations:
\[
q(s,s')=\frac{1}{N}\langle s,s'\rangle.
\]
In the paper, the latent overlap is
\[
R(x,x')=\frac{1}{N}\langle s(x),s(x')\rangle.
\]
It is analogous to cosine similarity after spherical normalization.

\item[Overlap distribution $P(q)$]
The distribution of overlaps between pairs of replicas. Its shape indicates the phase of the system: concentrated near zero in a disordered phase, shifted to positive values in an ordered phase, and possibly multi-peaked in glassy phases.

\item[Order parameter]
A statistic that diagnoses the phase of a system. Here, overlap distributions, overlap means/variances, and hyperspherical angular means act as latent-space order parameters.

\item[Disordered phase]
A high-temperature or weakly constrained phase where configurations are broadly spread. In latent space, this corresponds to codes spread almost uniformly on the hypersphere, with overlaps concentrated near zero.

\item[Ordered phase]
A phase where configurations align or cluster around preferred regions. In latent space, this corresponds to one or several compact islands with positive within-island overlap.

\item[Glassy phase]
A rugged phase with many low-energy states or basins. In ML language, this resembles a latent energy landscape with many local wells, fragmented low-energy regions, and nontrivial overlap structure.

\item[Replica symmetry]
A regime where replicas are statistically indistinguishable and the overlap structure is simple, often concentrated around a single value.

\item[Replica symmetry breaking (RSB)]
A regime where the overlap distribution becomes nontrivial, reflecting many organized pure states or basins. In this paper, RSB is used as the reference point for classical spin-glass hierarchy, contrasted with our nested hyperspherical ordering.

\item[Parisi hierarchy]
The classical hierarchical organization of overlaps in mean-field spin glasses. It is a hierarchy in the distribution of pairwise overlaps, not a hierarchy of individual hyperspherical coordinate marginals.

\item[Pure state]
A coherent cluster of configurations inside the Gibbs measure. In latent terms, a pure-state-like structure is analogous to a stable latent island or basin.

\item[Susceptibility]
A response measure that becomes large near a phase transition. In the paper, the overlap susceptibility
\[
\chi_{\mathrm{ovl}}(\alpha)=\mathrm{Var}_{x,x'}[R_\alpha(x,x')]
\]
is used as a finite-size indicator of critical reorganization in latent space.

\item[Critical point]
A value of a control parameter where the system changes phase or reorganizes sharply. In finite models this often appears as a peak rather than a true divergence.

\item[Edge of stability]
A marginal regime near a transition, where a stability mode is close to becoming unstable. Operationally, we associate this with peaks in susceptibility and, ideally, softening of Hessian modes or rapid changes in basin structure.

\item[Marginal stability]
A state where the smallest relevant stability eigenvalue is close to zero. In energy landscapes, this corresponds to soft directions and high sensitivity to perturbations.

\item[Complexity]
The exponential rate of growth in the number of critical points, often minima or saddles, at a given energy. In ML language, it measures how many distinct basins or stationary regions the landscape contains.

\item[Topological trivialization]
A transition where a rugged landscape with many minima becomes effectively single-basin or convex-like as a bias or field increases. In latent space, this corresponds to excessive compression, where most structure collapses into one dominant basin.

\item[Basin]
A region of the energy landscape attracted to the same local minimum under gradient descent. Latent islands can be interpreted as empirical basins of the induced latent energy.

\item[Hessian]
The matrix of second derivatives of the Hamiltonian. Its eigenvalues describe local curvature. Near marginal stability, small or near-zero eigenvalues indicate soft modes.

\item[Soft mode]
A direction in configuration space with very small curvature. Moving along a soft mode changes the energy only weakly, often indicating proximity to a stability boundary.

\item[Renormalization group (RG)]
A framework for studying how a system changes under coarse-graining. In this paper, ``RG-style'' means that we coarse-grain latent coordinates and track whether phase structure is stable under that reduction.

\item[Block-spin transformation]
A standard RG operation where groups of microscopic spins are averaged into effective spins. Our latent version averages blocks of latent coordinates:
\[
\tilde z_k(x)=\frac{1}{|B_k|}\sum_{i\in B_k}\mu_{\theta,i}(x).
\]

\item[Coarse-graining]
Reducing a high-dimensional system to fewer effective degrees of freedom while preserving large-scale structure. In the paper, coarse-graining maps high-dimensional latents to a low-dimensional block-spin representation.

\item[RG stability]
The persistence of cluster or phase structure under different coarse-grainings. If latent clusters remain visible after block averaging, we interpret this as evidence of ordered structure.

\item[Hyperspherical coordinates]
Coordinates on a high-dimensional sphere: radius plus angles
\[
(r,\phi_1,\ldots,\phi_{N-1}).
\]
They allow us to study ordering not only globally through overlaps, but coordinate-wise through angular distributions.

\item[Subsphere]
A lower-dimensional sphere obtained by fixing one or more angular coordinates, or equivalently by fixing an overlap with a reference direction.

\item[Band]
A thin region of the sphere at approximately fixed overlap with a reference configuration. Classical spherical spin-glass results show that restricting to such bands can again produce an effective mixed spherical model.

\item[Nested ordering]
The phenomenon where different hyperspherical angles order at different control-parameter values. This is distinct from Parisi RSB: the hierarchy is in coordinate-wise angular marginals, not in the global overlap distribution.

\item[$k$-Nested Order Transition ($k$-NOT)]
The paper's term for an ordering transition occurring at the $k$-th hyperspherical angular level. Each level corresponds to ordering on a successive subsphere.

\item[Control parameter]
The knob varied to induce a transition. In experiments this may be a compression strength, prior weight, KL coefficient, field direction, or effective inverse temperature.

\item[Compression]
In this paper, compression refers to the concentration of latent mass into smaller regions of the hypersphere. It is implemented through priors or hyperspherical-coordinate constraints that bias the latent toward ordered regions.

\end{description}

\subsection*{-------------------------------------- Other Terms used in the paper -------------------------------------------}

\begin{description}[leftmargin=3.4cm, style=nextline]

\item[Generated sample]
A decoded sample obtained by drawing a latent vector from a chosen latent sampling distribution and passing it through the decoder. In the standard VAE this is usually done by sampling from the Gaussian prior; in the compressed VAE, samples may be drawn from the empirically fitted compressed latent island.

\item[Reconstruction]
The decoder output obtained from the latent code of an input image:
\[
x \longmapsto \mu_\theta(x) \longmapsto \hat x = D_\theta(\mu_\theta(x)).
\]
Reconstruction quality measures whether the autoencoder preserves input information.

\item[Mean-squared error (MSE)]
A reconstruction metric,
\[
\mathrm{MSE}(x,\hat x)=\|x-\hat x\|_2^2,
\]
or its dataset average. Lower MSE means sharper or more faithful reconstructions. In the generation experiments, MSE is used together with self-FID to measure the reconstruction--generation trade-off.

\item[Fréchet Inception Distance (FID)]
A distributional image-quality metric comparing the feature distribution of generated images with that of a reference image set. Lower FID means the generated distribution is closer to the reference distribution in feature space.

\item[Self-FID]
The FID variant used in the generation paper. Instead of comparing generated images directly against the raw data distribution, self-FID compares randomly decoded/generated images against the reconstructed test distribution. Operationally, it asks whether latent samples decode into the manifold learned by the VAE. Lower self-FID means random latent samples produce outputs closer to the model's own reconstructed data distribution.

\item[Generation--reconstruction trade-off]
The empirical tension that increasing latent dimension or relaxing the prior can improve reconstruction but worsen random generation, because the latent space becomes too sparse for direct sampling. The desired regime is low MSE and low self-FID simultaneously.

\item[Latent sparsity]
The phenomenon that, in high dimension, training samples occupy a tiny fraction of the available latent volume. For a standard Gaussian VAE prior, samples concentrate near a high-dimensional hypersphere and are almost orthogonal, making direct prior sampling likely to hit regions unseen during training.

\item[Hypervolume compression]
The hyperspherical-coordinate mechanism that pushes latent codes away from high-volume equatorial regions and toward a lower-volume region of the hypersphere. The goal is to make the learned latent distribution denser and therefore more useful for generation and anomaly detection.

\item[Latent island]
A compact, dense region of the hypersphere where the compressed VAE places the training or ID data. In generation, sampling near this island improves decoded sample quality. In anomaly detection, distance from this island becomes a discriminative anomaly signal.

\item[Fully unsupervised anomaly detection]
An anomaly-detection setting in which the training data are all treated as one single nominal class and the model receives no information about sub-classes or internal structure inside the normal data. The model only learns ``normality'' as a single distribution. At test time, anomalies are detected by their deviation from this learned nominal latent island.

\item[Nominal data]
The data regarded as normal during training. In the fully unsupervised setting, all nominal samples are collapsed into a single normal class, without sub-class labels or class-conditional structure.

\item[Anomaly]
A test sample that does not belong to the nominal distribution. In the fully unsupervised experiments, examples include unusual Mars Rover images or odd galaxies in Galaxy Zoo.


\item[In-distribution (ID) data]
The normal data distribution in an OOD experiment. For example, CIFAR-10 can be used as ID, with OOD datasets such as SVHN, Places365, LSUN, iSUN, Textures, or CIFAR-100.

\item[Conditional OOD mode]
The OOD mode used when the normal/ID data have known sub-classes. Unlike the fully unsupervised case, the normal data are not treated as a single undifferentiated class: class information is used to form class-specific latent subclusters. The model compresses each ID class toward its own region of the hypersphere, and OOD samples are detected by their distance from these ID class islands.

\item[ID subcluster]
A class-specific component of the normal data in conditional OOD mode. For example, when CIFAR-10 is ID, each CIFAR-10 class defines one normal subcluster. These subclusters are all normal, but their labels provide internal structure that the model can exploit.

\item[Far-OOD]
An OOD setting where the anomalous samples come from datasets visually or semantically far from the ID data. Examples include CIFAR-10 as ID and SVHN, Places365, LSUN, iSUN, or Textures as OOD.

\item[Near-OOD]
An OOD setting where anomalous samples are semantically close to the ID data and therefore harder to separate. An example is CIFAR-10 as ID and CIFAR-100 as OOD.

\item[Complex near-OOD]
A stricter near-OOD setting where ID and OOD samples are semantically close and visually complex. In the AD paper, this corresponds to Imagenette as ID and selected close ImageNet classes as OOD.

\item[$k$-nearest-neighbor anomaly score]
A latent-space anomaly score defined by the average distance from a query latent point \(z\) to its \(k\) nearest nominal training latents:
\[
A(z)
=
\frac{1}{k}
\sum_{i=1}^{k} d(z,z_i),
\qquad
d(z,z_1)\le \cdots \le d(z,z_n).
\]
Larger values indicate that the query lies farther from the nominal latent island and is therefore more anomalous. In the AD experiments, \(k=3\) and \(d\) is the Euclidean distance in latent space.

\item[Latent \(k\)-NN]
The use of the \(k\)-NN anomaly score after encoding data into the VAE latent space. This differs from pixel-space \(k\)-NN, where distances are computed directly on raw images.

\item[Pixel-space \(k\)-NN]
A baseline anomaly detector that computes nearest-neighbor distances directly in image space. It does not use a learned latent representation.

\item[Isolation Forest (IF)]
A classical anomaly-detection baseline based on random partition trees. Points that are isolated quickly by random splits receive higher anomaly scores.

\item[Reconstruction-error anomaly score]
An anomaly score based on reconstruction failure:
\[
A(x)=\|x-\hat x\|.
\]
The intuition is that an autoencoder trained on normal data should reconstruct normal samples better than anomalies. In practice, this can fail when the autoencoder generalizes as a generic image compressor.

\item[AUROC]
Area under the receiver operating characteristic curve. It measures how well the anomaly score ranks anomalies above normal samples across all thresholds. Higher AUROC is better.

\item[FPR95]
False positive rate at \(95\%\) true positive rate. In OOD detection, it measures how many OOD/anomalous samples are incorrectly accepted as ID when the detector keeps \(95\%\) of ID samples. Lower FPR95 is better.

\item[True positive rate (TPR)]
The fraction of ID samples correctly accepted as ID under an OOD-detection convention, or equivalently the fraction of positives correctly detected under a chosen labelling convention. FPR95 fixes this value at \(95\%\) and reports the corresponding false-positive rate.

\item[False positive rate (FPR)]
The fraction of anomalous or OOD samples incorrectly classified as normal/ID. A low FPR means fewer anomalies are missed.

\item[vMF subcase]
A restricted version of hyperspherical compression in which only one angular coordinate is compressed, equivalent in spirit to using a single concentration parameter around a mean direction. The full compressed VAE is stronger because it compresses all hyperspherical angles, reducing hypervolume faster.

\item[Class-conditional hyperspherical compression]
The conditional-OOD version of the method. Each ID class is assigned a different reference direction or axis, and its latents are compressed toward the corresponding class island. This uses class labels of the normal data but no OOD labels for training.

\item[Single-island compression]
The fully unsupervised version of the method. All nominal training samples are compressed toward one common latent island because the model receives no information about subclasses inside the normal data.

\item[Latent distance to the island]
The geometric quantity underlying the \(k\)-NN anomaly score in the compressed latent space. Normal samples lie close to the island; anomalies tend to fall farther away, producing larger nearest-neighbor distances.

\item[Hyperspherical coordinate loss]
The modified VAE regularizer that applies constraints to the radius and angular coordinates of the latent distribution. It replaces the standard Cartesian KL geometry with a coordinate system adapted to high-dimensional spherical concentration.

\item[Angular compression]
The part of the hyperspherical-coordinate loss that moves angular coordinates away from high-volume equatorial regions. It is more effective than purely radial compression for reducing hyperspherical volume in high dimension.

\item[Radial constraint]
The part of the hyperspherical-coordinate loss that keeps the latent radius near the high-dimensional Gaussian shell, typically near \(\sqrt{N}\). This preserves the spherical geometry while angular compression controls where on the sphere the data lie.

\item[Equatorial regime]
The standard high-dimensional Gaussian/VAE regime in which latent mass lies near high-volume equatorial bands relative to many possible axes. This regime is sparse, almost-orthogonal, and problematic for direct generation and distance-based anomaly detection.

\item[Compressed regime]
The regime where the latent distribution has been moved away from the equator into a lower-volume region. This makes the training distribution denser and improves random generation or latent-distance anomaly detection.

\end{description}

\section{From autoencoders to spherical \texorpdfstring{$p$}{p}--spin glasses}
\label{sec:pspin_and_main_theorem}

In this section we recall the spherical $p$--spin glass Hamiltonian
(with external field), and then state our main theorem: an
autoencoder with a prior on the latent mean induces, for each fixed
decoder snapshot, a random Hamiltonian on the latent hypersphere
which is locally equivalent to a mixed $p$--spin glass with external
field. This formalises the spin-glass interpretation sketched in
Section~\ref{sec:encoder_spin_glass}.

\subsection{Spherical \texorpdfstring{$p$}{p}--spin glass with external field}

Let $N$ be the number of spins and let $S^{N-1} \;=\; \bigl\{ s \in \mathbb{R}^N : \|s\|_2^2 = N \bigr\}$ be the $(N-1)$--dimensional sphere of radius $\sqrt{N}$.
A \emph{pure $p$--spin spherical Hamiltonian} with external field is
a random function
\begin{equation}
  \label{eq:p_spin_pure}
  H^{(p)}_N(s)
  \;=\;
  - \sum_{1 \le i_1 < \cdots < i_p \le N}
    J^{(p)}_{i_1\cdots i_p}\, s_{i_1}\cdots s_{i_p}
  \;-\;
  \sum_{i=1}^N h_i\, s_i,
  \qquad s \in S^{N-1},
\end{equation}
where $\{J^{(p)}_{i_1\cdots i_p}\}$ are i.i.d.\ centred Gaussian
couplings with variance
$\mathbb{E}\bigl[(J^{(p)}_{i_1\cdots i_p})^2\bigr]
 = \frac{p!}{2 N^{p-1}}$,
and $\{h_i\}$ is an external field (which may be deterministic or
random) with typical magnitude of order $1$. The conventional
scaling by $N^{-(p-1)/2}$ ensures that $H^{(p)}_N(s)$ is of order
$N$ for typical configurations $s$.

The case $p=2$ corresponds to a (random) quadratic form with a
linear field term and exhibits a single convex phase. For $p\ge 3$,
even at $h_i = 0$ the landscape of $H^{(p)}_N$ is highly nonconvex,
with an exponential number of critical points and rich phase
behaviour; adding an external field can drive a \emph{topological
trivialisation} or \emph{convexification} transition in which most
local minima disappear and only one global basin remains.

More generally, one often considers \emph{mixed $p$--spin} models
obtained by superposing several pure $p$ terms:
\begin{equation}
  \label{eq:mixed_p_spin}
  H_N(s)
  \;=\;
  \sum_{p\ge 2} \alpha_p\, H^{(p)}_N(s)
  \;-\;
  \sum_{i=1}^N h_i\, s_i,
  \qquad
  \sum_{p\ge 2} \alpha_p^2 < \infty,
\end{equation}
where each $H^{(p)}_N$ is an independent pure $p$--spin Hamiltonian
with the scaling above and $\{\alpha_p\}$ are mixing coefficients.
The linear term
$\sum_i h_i s_i$ can be interpreted as a $p=1$ contribution; it acts
as a uniform external field which biases the spins towards a
preferred direction and, when strong enough, can destroy the glassy
structure of the landscape.

In the remainder we will compare the latent energy functions of
autoencoders to Hamiltonians of the form~\eqref{eq:mixed_p_spin} on
$S^{N-1}$.

\subsection{Autoencoder with a prior on the latent mean}\label{mfield}

We now specialise to a simple but representative autoencoder
architecture and introduce a prior on the latent mean. Let
$\mathcal{X} \subset \mathbb{R}^{d_x}$ be the input space and let
$\mathcal{Z} \subset \mathbb{R}^{d_z}$ be the latent space, with
$d_z = N$ large. We consider a deterministic encoder
$E_\theta : \mathcal{X}\to \mathcal{Z}$ and decoder
$D_\theta : \mathcal{Z}\to \mathcal{X}$,
parameterised by weights $\theta$. For each input $x$ we denote
\[
  \mu_\theta(x) \;:=\; E_\theta(x) \in \mathcal{Z},
\]
and interpret $\mu_\theta(x)$ as the latent mean (or simply the
latent code) associated with $x$.

We assume the training objective has the form
\begin{equation}
  \label{eq:AE_mu_prior_loss}
  \mathcal{L}(\theta)
  \;=\;
  \mathbb{E}_{x\sim\mathcal{D}}
  \Bigl[
    \ell\bigl(x, D_\theta(\mu_\theta(x))\bigr)
    \;+\;
    \lambda\, \Phi\bigl(\mu_\theta(x)\bigr)
  \Bigr],
\end{equation}
where $\mathcal{D}$ is the data distribution, $\ell$ is a
reconstruction loss (e.g.\ squared error), $\lambda>0$ is a
hyperparameter, and $\Phi:\mathcal{Z}\to\mathbb{R}$ is a prior term
on the latent mean. In this section we focus on a quadratic prior
centred at a preferred mean $m\in \mathcal{Z}$:
\begin{equation}
  \label{eq:mu_prior}
  \Phi(\mu)
  \;=\;
  \frac{1}{2}\,\|\mu - m\|_2^2
  \;=\;
  \frac{1}{2}\,\|\mu\|_2^2
  - \langle m,\mu\rangle
  + \text{const}.
\end{equation}
This corresponds to an isotropic Gaussian prior $\mathcal{N}(m, I)$
on the latent mean. For a fixed decoder snapshot
$\theta^\star$, we define the \emph{latent energy for input $x$} as
\begin{equation}
  \label{eq:latent_energy_mu}
  H_x(\mu)
  \;=\;
  \ell\bigl(x, D_{\theta^\star}(\mu)\bigr)
  \;+\;
  \lambda\, \Phi(\mu),
\end{equation}
with $\Phi$ given by~\eqref{eq:mu_prior}. For the purpose of this
section, $\mu$ is regarded as a free variable in $\mathcal{Z}$; the
encoder simply selects a particular configuration
$\mu_\theta(x)$ on which gradient descent is performed during
training.

To make contact with the spherical spin-glass framework, we further
restrict $\mu$ to lie on a hypersphere of radius $\sqrt{N}$:
\[
  \mu \in S^{N-1}
  \;=\;
  \bigl\{ \mu \in \mathbb{R}^N : \|\mu\|_2^2 = N \bigr\}.
\]
In the high-dimensional regime this is natural, since under a
Gaussian prior the norm $\|\mu\|_2$ concentrates near $\sqrt{N}$.
On $S^{N-1}$, the term $\tfrac{1}{2}\|\mu\|_2^2$ in
\eqref{eq:mu_prior} is constant and can be discarded; the prior
contributes an \emph{external field term} $-\lambda \langle m,\mu\rangle$
to $H_x$. Thus the prior on the latent mean plays precisely the role
of a magnetic field which biases the latent spins towards the
direction of $m$, we call this ``(hypervolume-)compression''.

\subsection{Latent manifold and induced Riemannian metric}

The previous discussion can be given a geometric formulation by
equipping the latent space with a natural Riemannian metric induced
by the encoder or by the training dynamics. Let $\mathcal{W}$ denote
the parameter space, equipped with the standard Euclidean metric
$d\mathcal{W}^2 = \langle d\theta, d\theta\rangle$. Consider a
smooth map
\[
  f : \mathcal{Z} \to \mathcal{W},
\]
which encodes how latent configurations $z$ are embedded into, or
represented in, weight space along training (for instance via the
Jacobian map or by associating to each $z$ the parameter vector
$\theta_t$ that produces it as $E_{\theta_t}(x)$ for some $x$). The
pullback of the Euclidean metric on $\mathcal{W}$ by $f$ defines a
Riemannian metric $g$ on the latent space:
\begin{equation}
  \label{eq:pullback_metric}
  g_z(u,v)
  \;=\;
  \bigl\langle Df_z(u),\, Df_z(v) \bigr\rangle_{\mathcal{W}},
  \qquad u,v \in T_z \mathcal{Z}.
\end{equation}
Thus, $(\mathcal{Z}, g)$ becomes a Riemannian manifold on which we
can consider gradient flows of scalar functions such as the latent
energy $H_x(\cdot;\theta)$. The training trajectory of the encoder
for a fixed $x$ can be viewed either as a path $\theta_t$ in weight
space or as a path $z_t(x)$ in latent space; the metric $g$ simply
translates the notion of steepest descent from $\mathcal{W}$ to
$\mathcal{Z}$.

Under this viewpoint, stochastic gradient descent on $\theta$ with
small step size induces, in the continuum limit, a stochastic
gradient flow on $(\mathcal{Z},g)$ with potential $H_x$ and effective
inverse temperature $\beta_{\text{eff}}$. As in the classical
spin-glass case, one expects the stationary measure of such a flow,
when it exists, to be close to a Gibbs distribution of the form
\begin{equation}
  \label{eq:gibbs_latent}
  \pi_{\beta_{\text{eff}}}(z \mid x)
  \;\propto\;
  \exp\bigl[-\beta_{\text{eff}} H_x(z; \theta)\bigr] \, d\mu_g(z),
\end{equation}
where $\mu_g$ is the Riemannian volume measure induced by $g$ on
$\mathcal{Z}$. In our applications we will often restrict attention
to latent codes normalized to lie on a hypersphere, in which case
$\mu_g$ reduces to the standard hyperspherical surface measure and
the analogy with spherical spin glasses becomes particularly sharp.

\medskip
In summary, we interpret the encoder network as defining, for each
input $x$, a disordered Hamiltonian $H_x(\cdot;\theta)$ on the
latent hypersphere, with the latent coordinates playing the role of
spins and the training dynamics mimicking Langevin evolution in this
energy landscape. This spin-glass interpretation justifies importing
tools such as overlap order parameters, phase transitions, and
renormalization-group transformations into the analysis of VAE
latents.

\subsection{Training dynamics and Langevin analogy}

Training the encoder--decoder pair is typically done by (stochastic)
gradient descent on the empirical loss
\[
  \mathcal{L}(\theta)
  \;=\;
  \mathbb{E}_{x\sim \mathcal{D}}\bigl[
    \ell\bigl(x, D_\theta(E_\theta(x))\bigr)
    + R\bigl(E_\theta(x)\bigr)
  \bigr],
\]
where $\mathcal{D}$ is the training data distribution. Let
$\theta_t$ denote the parameter vector after $t$ steps of (stochastic)
gradient descent with step size $\eta$. For a fixed input $x$, this
induces a time evolution of its latent representation
\[
  z_t(x) \;:=\; E_{\theta_t}(x) \in \mathcal{Z}.
\]
Thus, even though we are not explicitly updating $z$ during training,
every parameter update changes $z_\theta(x)$, and the trajectory
$t \mapsto z_t(x)$ can be thought of as the dynamical evolution of
a spin configuration under a time-dependent Hamiltonian.

To make the analogy more precise, consider a single gradient step
on $\theta$:
\[
  \theta_{t+1} \;=\; \theta_t - \eta \,\nabla_\theta \mathcal{L}(\theta_t)
  \;+\; \text{(stochastic noise)}.
\]
By the chain rule, the induced change in $z_t(x)$ is
\[
  z_{t+1}(x) - z_t(x)
  \;\approx\;
  J_\theta(x)\, \bigl(\theta_{t+1} - \theta_t\bigr),
\]
where $J_\theta(x) = \nabla_\theta E_\theta(x)$ is the Jacobian of
the encoder with respect to the parameters. Substituting the update
on $\theta$ yields a discrete-time stochastic dynamical system for
$z_t(x)$ of the schematic form
\begin{equation}
  \label{eq:z_dynamics}
  z_{t+1}(x)
  \;\approx\;
  z_t(x)
  \;-\;
  \eta\, J_\theta(x)\,\nabla_\theta \mathcal{L}(\theta_t)
  \;+\;
  \text{(projected noise)}.
\end{equation}
On the other hand, the gradient of the latent energy
\eqref{eq:latent_energy} with respect to $z$ is
\[
  \nabla_z H_x(z;\theta)
  \;=\;
  \nabla_z \ell\bigl(x, D_\theta(z)\bigr)
  + \nabla_z R(z),
\]
so that, after appropriate reparametrization of time and under
standard smoothness assumptions, the induced dynamics of $z_t(x)$
can be viewed as a stochastic gradient flow on $H_x(\cdot;\theta_t)$:
\begin{equation}
  \label{eq:langevin_latent}
  z_{t+1}(x)
  \;\approx\;
  z_t(x)
  \;-\; \eta_{\text{eff}}\, \nabla_z H_x\bigl(z_t(x); \theta_t\bigr)
  \;+\; \sqrt{2/\beta_{\text{eff}}}\,\xi_t,
\end{equation}
where $\xi_t$ models the noise coming from stochastic gradients and
mini-batching, and $\eta_{\text{eff}}$, $\beta_{\text{eff}}$ are
effective step size and inverse temperature parameters determined by
the learning rate, the Jacobian $J_\theta(x)$, and the variance of
the gradient noise.

Equation~\eqref{eq:langevin_latent} is the latent-space analogue of
Langevin dynamics~\eqref{eq:langevin}, with the Hamiltonian given by
the latent energy $H_x$. In this picture,
\begin{itemize}
  \item the latent coordinates $z_i$ are continuous spins;
  \item the changing weights $\theta_t$ provide a slowly evolving
        source of quenched disorder in $H_x(\cdot;\theta_t)$;
  \item the encoder trajectory $z_t(x)$ is a noisy gradient descent
        path on the latent energy landscape;
  \item the effective inverse temperature $\beta_{\text{eff}}$ is
        inversely related to the learning rate and to the strength
        of gradient noise.
\end{itemize}
Intuitively, large learning rates and strong noise correspond to
high temperature (exploration of many latent configurations), while
small learning rates and weak noise correspond to low temperature
(exploration restricted to deep basins).

\subsection{A mixed $p$--spin expansion of the latent energy}

To formalise the spin-glass analogy we need slightly more structure
than the bare Taylor theorem. In particular, we want the latent
Hamiltonian $H_x$ to admit an expansion in homogeneous polynomials
whose coefficients decay sufficiently fast with the degree $p$ so
that the mixture is well-defined in the usual sense of mixed
$p$--spin models.

We therefore introduce the following notion.


\begin{definition}[Mixed spherical $p$--spin type with exponential mixture]
\label{def:mixed_p_spin_type}
A random Hamiltonian $H_N : S^{N-1}\to\mathbb{R}$ is of
\emph{mixed spherical $p$--spin type} if it admits the expansion
\begin{equation}
\label{eq:mixed_p_spin_type}
H_N(s)
\;=\;
C_N
\;-\;
\sum_{p\ge 1}
\sum_{1\le i_1<\cdots<i_p\le N}
J^{(p)}_{i_1\cdots i_p}\, s_{i_1}\cdots s_{i_p},
\qquad s\in S^{N-1},
\end{equation}
where $C_N$ is (possibly random) and, for each $p\ge 1$, the symmetric
tensor $J^{(p)}=(J^{(p)}_{i_1\cdots i_p})$ satisfies the \emph{mean-field scaling}
\begin{equation}
\label{eq:variance_scaling}
\bigl\|J^{(p)}\bigr\|_{\mathrm{F}}^2
\;=\;
\sum_{1\le i_1<\cdots<i_p\le N}
\mathbb{E}\!\left[(J^{(p)}_{i_1\cdots i_p})^2\right]
\;\le\;
\frac{\sigma_p^2}{N^{\,p-1}},
\end{equation}
for a coefficient sequence $(\sigma_p)_{p\ge1}$ with \emph{exponential decay}:
there exists $\rho\in(0,1)$ and $A<\infty$ such that
\begin{equation}
\label{eq:exp_decay_sigma}
\sigma_p \;\le\; A\,\rho^{\,p}
\qquad\text{for all }p\ge1,
\end{equation}
equivalently $\sum_{p\ge1}(1+\varepsilon)^{p}\sigma_p^2<\infty$ for some $\varepsilon>0$.
The linear term ($p=1$) acts as an external field; $p\ge2$ encode multi-spin interactions.
\end{definition}

\begin{remark}[Literature convention]
The exponential mixture condition \eqref{eq:exp_decay_sigma} is the standard
regularity assumption in rigorous treatments of mixed $p$--spin models,
e.g.\ requiring $\sum_{p\ge1}2^{p}\beta_p^{2}<\infty$ or, equivalently, that
the covariance $\xi(t)=\sum_{p\ge1}\beta_p^2 t^p$ extends analytically
beyond $t=1$; see, for instance, \cite{Panchenko2013} (Thm.\ 1.2) and
\cite{JagannathTobasco2015} (Assump.\ 1.1).
\end{remark}

Note that Definition~\ref{def:mixed_p_spin_type} does \emph{not}
assert that the couplings are independent or exactly Gaussian; it
only enforces the characteristic norm scaling with $N$ and the
decay of the mixture coefficients with $p$. In the wide-network,
random-weights limit, one expects approximate Gaussianity and weak
dependence by central limit arguments, but we will not need to make
that precise for our purposes.

We can now state our main structural result for the autoencoder
latent energy.

\begin{theorem}[Latent mixed-degree expansion with exponential \emph{decay} in $p$]
\label{thm:mixed_p_spin_latent}
Fix an input $x$ and define the latent energy
\[
H_x(\mu)\;=\;\ell\!\bigl(x, D_{\theta^\star}(\mu)\bigr)\;+\;\lambda\,\tfrac12\|\mu-m\|_2^2,
\qquad \mu\in\mathbb{R}^N,
\]
where $D_{\theta^\star}$ is a fixed decoder and $\ell$ a reconstruction loss.

\paragraph{Assumption (exponential derivative \emph{decay} on a latent neighbourhood).}
There exist a compact set $K\subset\mathbb{R}^N$ containing all latent codes of interest, constants $C_x<\infty$ and $\rho_x\in(0,1)$ such that for all $p\ge1$,
\begin{equation}
\label{eq:decay-assumption}
\sup_{\mu\in K}\Bigl\|\nabla^{(p)}_{\!\mu}\bigl[\ell\!\left(x, D_{\theta^\star}(\mu)\right)\bigr]\Bigr\|
\;\le\; C_x\,\rho_x^{\,p}\,p!.
\end{equation}
(The prior contributes only degrees $p\le2$ and can be absorbed into constants.)

Let $s=\mu/\sqrt{N}\in S^{N-1}$ and expand $H_x(\sqrt{N}\,s)$ homogeneously:
\begin{equation}
\label{eq:latent-mixed-expansion-clean}
H_x(\sqrt{N}\,s)
\;=\;
C_x
\;-\;
\sum_{p\ge1}\;
\sum_{1\le i_1<\cdots<i_p\le N}
J^{(p)}_{x,i_1\cdots i_p}\,s_{i_1}\cdots s_{i_p}.
\end{equation}
Then there exist symmetric coefficient tensors $J^{(p)}_x$ such that, with the mean-field normalization
\(
\widetilde J^{(p)}_x := N^{-(p-1)/2}\,J^{(p)}_x,
\)
one has the exponential \emph{decay} bound
\begin{equation}
\label{eq:mf-exp-decay}
\bigl\|\widetilde J^{(p)}_x\bigr\|_{\!F}
\;\le\; A_x\,\rho_x^{\,p}
\qquad\text{for all }p\ge1,
\end{equation}
for some constant $A_x<\infty$ (independent of $p$). Equivalently, setting $\sigma_p(x):=A_x\rho_x^{\,p}$,
\[
\|J^{(p)}_x\|_{\!F}\;\le\;\sigma_p(x)\,N^{(p-1)/2}
\quad\text{with}\quad
\sum_{p\ge1}(1+\varepsilon)^{p}\,\sigma_p(x)^2<\infty\ \text{for some }\varepsilon>0.
\]
The linear part is $-\sum_{i=1}^N J^{(1)}_{x,i}\,s_i=-\langle h_x,s\rangle$, where $h_x$ combines the prior field $-\lambda m$ and the linear Taylor term of $\ell(x,D_{\theta^\star}(\cdot))$ on $K$.
\end{theorem}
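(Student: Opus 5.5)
The plan is a Taylor (analytic) expansion of $f_x(\mu):=\ell(x,D_{\theta^\star}(\mu))$ about the origin, followed by the substitution $\mu=\sqrt N\,s$ and a degree-by-degree bookkeeping of the $N$-powers. The prior is treated separately and exactly: on $\|\mu\|_2^2=N$ the term $\tfrac\lambda2\|\mu\|_2^2=\tfrac{\lambda N}{2}$ goes into $C_x$, and $-\lambda\langle m,\mu\rangle=-\lambda\sqrt N\langle m,s\rangle$ goes into the $p=1$ coupling. This identifies the prior's contribution to the field $h_x$, as claimed in the last sentence of Theorem~\ref{thm:mixed_p_spin_latent}.

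\textbf{Step 1 (analyticity).} First I would read the norm in \eqref{eq:decay-assumption} as the injective (multilinear operator) norm, so that $|\nabla^{(p)}f_x(\mu)[h,\dots,h]|\le C_x\rho_x^p\,p!\,\|h\|^p$. The Cauchy--Hadamard/Taylor remainder estimate then shows that $f_x$ is real-analytic on $K$. Its Taylor series about a centre $\mu_0\in K$ converges on $\{\|\mu-\mu_0\|<\rho_x^{-1}\}\cap K$, because the Lagrange remainder of order $P$ is bounded by $C_x(\rho_x\|\mu-\mu_0\|)^{P+1}$.

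\textbf{Step 2 (coefficients).} Taking the centre to be the origin (assumed in $K$), I would define the homogeneous pieces $T_p(\mu)=\frac1{p!}\nabla^{(p)}f_x(0)[\mu,\dots,\mu]$. Substituting $\mu=\sqrt N\,s$ gives $T_p(\sqrt N s)=N^{p/2}\frac1{p!}\nabla^{(p)}f_x(0)[s,\dots,s]$. I would then set $J^{(p)}_x:=-N^{p/2}\,\mathrm{Sym}\bigl(\nabla^{(p)}f_x(0)\bigr)/p!$, collected over index multisets. The ordered sum in \eqref{eq:latent-mixed-expansion-clean} should be read as a sum over $i_1\le\cdots\le i_p$, with multinomial multiplicities absorbed into the coefficients, because repeated-index monomials such as $s_i^2$ genuinely occur and cannot be removed on the sphere. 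Symmetry of $J^{(p)}_x$ is automatic from equality of mixed partials. The mean-field normalization then gives
\[
\widetilde J^{(p)}_x=N^{-(p-1)/2}J^{(p)}_x=-\sqrt N\,\frac{\nabla^{(p)}f_x(0)}{p!},
\]
up to the multiplicity factors. Consequently $\|\widetilde J^{(p)}_x\|_F\le \sqrt N\,c_{N,p}\,C_x\rho_x^p$, where $c_{N,p}$ is the comparison constant between the Frobenius and injective norms. Since $x$ and $N$ are fixed, I would absorb $\sqrt N$ into $A_x$. For $c_{N,p}$, the cleanest route is to assume, or reinterpret, the bound \eqref{eq:decay-assumption} directly in Frobenius norm, which makes $c_{N,p}=1$. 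Otherwise $c_{N,p}\le N^{(p-1)/2}$, and that factor would destroy the decay unless it is compensated by shrinking $\rho_x$. The summability condition $\sum_p(1+\varepsilon)^p\sigma_p^2<\infty$ then follows from $\rho_x<1$ by choosing $1+\varepsilon<\rho_x^{-2}$.

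\textbf{Main obstacle.} The hard part is making the expansion \eqref{eq:latent-mixed-expansion-clean} actually \emph{equal} $H_x$ on the sphere, and not just a formal series. Convergence at $\mu=\sqrt N s$ requires $\sqrt N\rho_x<1$, i.e.\ the whole radius-$\sqrt N$ sphere must lie in $K$ and inside the disc of convergence. I would first try to build this into the hypothesis, reading $K\supseteq\sqrt N\,S^{N-1}$ together with the assumption as holding with $\rho_x$ replaced by $\rho_x/\sqrt N$. This is exactly the scaling under which the $N^{p/2}$ from the substitution cancels, and it yields an $N$-uniform $A_x$. If that reading is not acceptable, the fallback is to state the identity for the truncated expansion up to degree $P$ with an explicit remainder $O\bigl(C_x(\sqrt N\rho_x)^{P+1}\bigr)$. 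The conclusion would then be that $H_x$ is of mixed spherical $p$--spin type in the sense of Definition~\ref{def:mixed_p_spin_type} locally, in a neighbourhood (a band) of a reference configuration, which is consistent with the word ``locally'' used in the section introduction.
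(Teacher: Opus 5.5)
Your route is the paper's own. Its sketch Taylor-expands $H_x$, substitutes $\mu=\sqrt N s$, takes $J^{(p)}_x\propto(\sqrt N)^p\nabla^{(p)}H_x/p!$, divides by $N^{(p-1)/2}$, and absorbs the leftover $N^{1/2}$ and the ``combinatorial constants'' into $A_x$ because $N$ is fixed. It also tacitly reads \eqref{eq:decay-assumption} as a Frobenius bound (your ``cleanest route''), ignores diagonal monomials, and never checks convergence. Your remarks on the norm and on diagonal terms are correct.

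\textbf{The convergence step fails, not just lacks proof.} The obstacle you isolate is genuine and the paper does not resolve it either. Let $u=N^{-1/2}(1,\dots,1)$ and take a decoder with $\ell(x,D_{\theta^\star}(\mu))=g(\langle u,\mu\rangle)$, where $g(t)=(1+\rho_x^2t^2)^{-1}$. Writing $g=\tfrac12[(1+i\rho_xt)^{-1}+(1-i\rho_xt)^{-1}]$ gives $|g^{(p)}(t)|\le p!\,\rho_x^p$ for all real $t$. Since $\nabla^{(p)}_\mu\ell=g^{(p)}(\langle u,\mu\rangle)\,u^{\otimes p}$ has injective and Frobenius norm both equal to $|g^{(p)}|$, \eqref{eq:decay-assumption} holds with $C_x=1$ on every compact $K$. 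Yet the Taylor series at the origin, $\sum_k(-1)^k\rho_x^{2k}\langle u,\mu\rangle^{2k}$, diverges at the sphere point $\mu=\sqrt N\,u$ whenever $\sqrt N\rho_x\ge1$.

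So no argument through the Taylor series at $0$ yields \eqref{eq:latent-mixed-expansion-clean} under the stated hypothesis. What your argument (and the paper's) actually establishes is the version with $\sqrt N\rho_x<1$. In that version the whole closed ball $\{\|\mu\|_2\le\sqrt N\}$ must lie in $K$, not just the sphere, because the remainder is estimated along segments from $0$. Keeping the hypothesis as stated would require exploiting the non-uniqueness of polynomial representations on the sphere (e.g.\ a spherical-harmonic expansion). That is a different argument, and neither you nor the paper supplies it.

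\textbf{The multiplicity factors are not harmless.} This is a slip in your bookkeeping, and the paper's ``dimension-free combinatorial constants'' hide the same problem. If you collect over multisets $\alpha$ with $|\alpha|=p$, the coupling attached to $s^\alpha$ is $-N^{p/2}\partial^\alpha f_x(0)/\alpha!$. That is $p!/\alpha!$ times an entry of $-N^{p/2}\nabla^{(p)}f_x(0)/p!$. Hence the Frobenius norm over multisets can exceed $N^{p/2}\|\nabla^{(p)}f_x(0)\|_F/p!$ by a factor up to $\max_{|\alpha|=p}(p!/\alpha!)^{1/2}$, which for fixed $N$ grows like $N^{p/2}$ up to polynomial factors in $p$.

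The example above attains this, since $\nabla^{(2k)}f_x(0)/(2k)!=(-1)^k\rho_x^{2k}u^{\otimes 2k}$. Even when $\sqrt N\rho_x<1$, the multiset couplings then decay only at rate about $\sqrt N\rho_x$, and the bound $\|\widetilde J^{(p)}_x\|_F\le\sqrt N\,C_x\rho_x^p$ of your Frobenius reading fails.

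The fix is not to collect at all. Sum over all ordered tuples $(i_1,\dots,i_p)\in\{1,\dots,N\}^p$ with the symmetric tensor $J^{(p)}_x=-N^{p/2}\nabla^{(p)}f_x(0)/p!$, handling the prior as you do. This accommodates diagonal monomials and gives $\widetilde J^{(p)}_x=-\sqrt N\,\nabla^{(p)}f_x(0)/p!$ exactly. It is the convention in which those combinatorial constants really are equal to one.
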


\begin{proof}[Sketch]
Fix $x$ and work on the compact $K$. By \eqref{eq:decay-assumption} and adding the (at most quadratic) prior, the $p$-th derivative tensors of $H_x$ satisfy
\(
\sup_{\mu\in K}\|\nabla^{(p)}_{\!\mu}H_x(\mu)\|\le C'_x\,\rho_x^{\,p}\,p!
\)
for all $p\ge1$. Taylor-expanding $H_x$ and substituting $\mu=\sqrt{N}\,s$ yield a homogeneous expansion in the monomials $s_{i_1}\cdots s_{i_p}$. Symmetrization gives coefficient tensors
\(
J^{(p)}_x \propto (\sqrt{N})^{p}\,\nabla^{(p)}_{\!\mu}H_x/p!
\)
up to universal combinatorial factors. Hence
\[
\|J^{(p)}_x\|_{\!F}
\;\lesssim\;
(\sqrt{N})^{p}\,C'_x\,\rho_x^{\,p}.
\]
Divide by the mean-field factor $N^{(p-1)/2}$ to obtain
\[
\bigl\|\widetilde J^{(p)}_x\bigr\|_{\!F}
\;=\;N^{-(p-1)/2}\,\|J^{(p)}_x\|_{\!F}
\;\lesssim\;C'_x\,\rho_x^{\,p}\,N^{1/2}.
\]
Since $N$ is fixed in the statement, absorb $N^{1/2}$ and the (dimension-free) combinatorial constants into $A_x$, which yields \eqref{eq:mf-exp-decay}. On the sphere, the quadratic prior becomes a constant; its linear part produces the field $h_x$, giving the stated form of the $p=1$ term.
\end{proof}

\paragraph{Analyticity vs.\ the derivative–decay assumption.}
If we were willing to assume that both the decoder $D_{\theta^\star}$ and the reconstruction map $\mu\mapsto\ell(x,D_{\theta^\star}(\mu))$ extend analytically to a complex neighbourhood of the latent region $K$, then the exponential derivative bound in Assumption~\ref{thm:mixed_p_spin_latent} would be automatic: by standard Cauchy estimates there exist $M_x<\infty$ and a radius $R_x>0$ such that $\sup_{\mu\in K}\|\nabla_\mu^{(p)}\ell(x,D_{\theta^\star}(\mu))\|\le M_x\,R_x^{-p}p!$ for all $p\ge1$, i.e.\ the assumption holds with $\rho_x=R_x^{-1}\!<1$. We \emph{do not} impose analyticity, however, because we want to include common non-analytic architectures (e.g.\ ReLU, max-pooling, hard-thresholding, certain normalisation layers) where higher derivatives are only piecewise defined and Cauchy bounds do not apply. Our theorem is therefore stated under the weaker, local \emph{exponential derivative–decay} condition \eqref{eq:decay-assumption}, which can also be empirically verified by (P0) in \ref{P0}. In practice, one can recover the analytic regime either by using smooth activations (e.g.\ SiLU/GELU/softplus) or by a mild Gaussian mollification of the latent map (equivalently, adding small latent noise, which we do in \ref{P0}), but we keep the assumption minimal precisely to cover ReLU-style decoders without additional smoothing.

In the wide-network, random-weights regime, one expects the entries
of $J^{(p)}_x$ to be approximately Gaussian and weakly dependent as
$N\to\infty$, by central limit heuristics for multilinear forms of
independent weights. We will use this as guiding intuition, but the
analysis in the rest of the paper relies only on the mixed
$p$--spin-type structure of $H_x$ as encoded in
Definition~\ref{def:mixed_p_spin_type} and
Theorem~\ref{thm:mixed_p_spin_latent}.

\subsection{(P0) Taylor–spin–glass coefficient decay check.}\label{P0}
\label{par:P0_decay}
We empirically verify the key assumption that the mixed-$p$ expansion coefficients of the latent Hamiltonian decay (effectively) exponentially in $p$ after the hyperspherical normalisation. We do this by directly measuring higher-order directional derivatives of $H_x(\mu)$ at the realised codes $\mu_\theta(x)$ up to a finite order $P_{\max}$.

\textbf{Object of interest.}
Let $H_x(\mu)=\ell(x,D(\mu))+\lambda\,\Phi(\mu)$ and fix $x$ and a unit direction $v\in\mathbb{R}^N$. Define the $p$-th directional derivative at $\mu=\mu_\theta(x)$:
\[
D_p(x;v) \;:=\; \left.\frac{d^p}{dt^p} H_x\bigl(\mu_\theta(x) + t\,v\bigr)\right|_{t=0}.
\]
The Taylor coefficient (scalar) along $v$ is $a_p(x;v) := \frac{1}{p!}\,D_p(x;v)$. To compare with the spherical spin-glass scaling on $s=\mu/\sqrt{N}$, we normalise by $(\sqrt{N})^p$:
\[
\widehat{a}_p(x;v) \;:=\; \frac{1}{(\sqrt{N})^p}\,\frac{1}{p!}\,D_p(x;v).
\]
For $p\ge 3$, only the decoder–loss composition contributes (the Gaussian $\Phi$ contributes at most $p\le 2$), so we fit decay on $p\in\{3,\dots,P_{\max}\}$.

\textbf{Protocol.}
\begin{enumerate}[leftmargin=1.25em]
  \item Select $P_{\max}\in\{6,8\}$, a held-out subset $X_{\text{sub}}$ (size $m=512$), and per-$x$ direction count $n_{\text{dir}}=16$. Sample $v$ i.i.d.\ from $\mathcal{N}(0,I/N)$ and renormalise to $\|v\|_2=1$.
  \item Compute $D_p(x;v)$ via automatic differentiation. Use double precision and clip $|D_p|$ at the $99.9$-th percentile to reduce outlier blow-up.
  \item Form $\widehat{a}_p(x;v)$ and aggregate per $\alpha$ (the control parameter) by
  \[
    \widehat{c}_p(\alpha) \;:=\; \operatorname{median}_{x\in X_{\text{sub}},\,v}\bigl|\widehat{a}_p(x;v)\bigr|.
  \]
\end{enumerate}

\textbf{Exponential-decay test.}
We test whether $\widehat{c}_p(\alpha)$ decays (approximately) geometrically in $p$:

\vspace{2pt}
\hspace{1.25em} \emph{Ratio test:} $r_p(\alpha):=\widehat{c}_{p+1}(\alpha)/\widehat{c}_p(\alpha)$ for $p=3,\dots,P_{\max}-1$. Report the geometric mean $\bar r(\alpha)$ and its bootstrap $95\%$ CI over seeds. Passing criterion: $\bar r(\alpha)<1$ with CI not crossing $1$. See Fig.\ref{decay} for results.

\begin{figure}[h]
    \centering
    \includegraphics[width=0.9\linewidth]{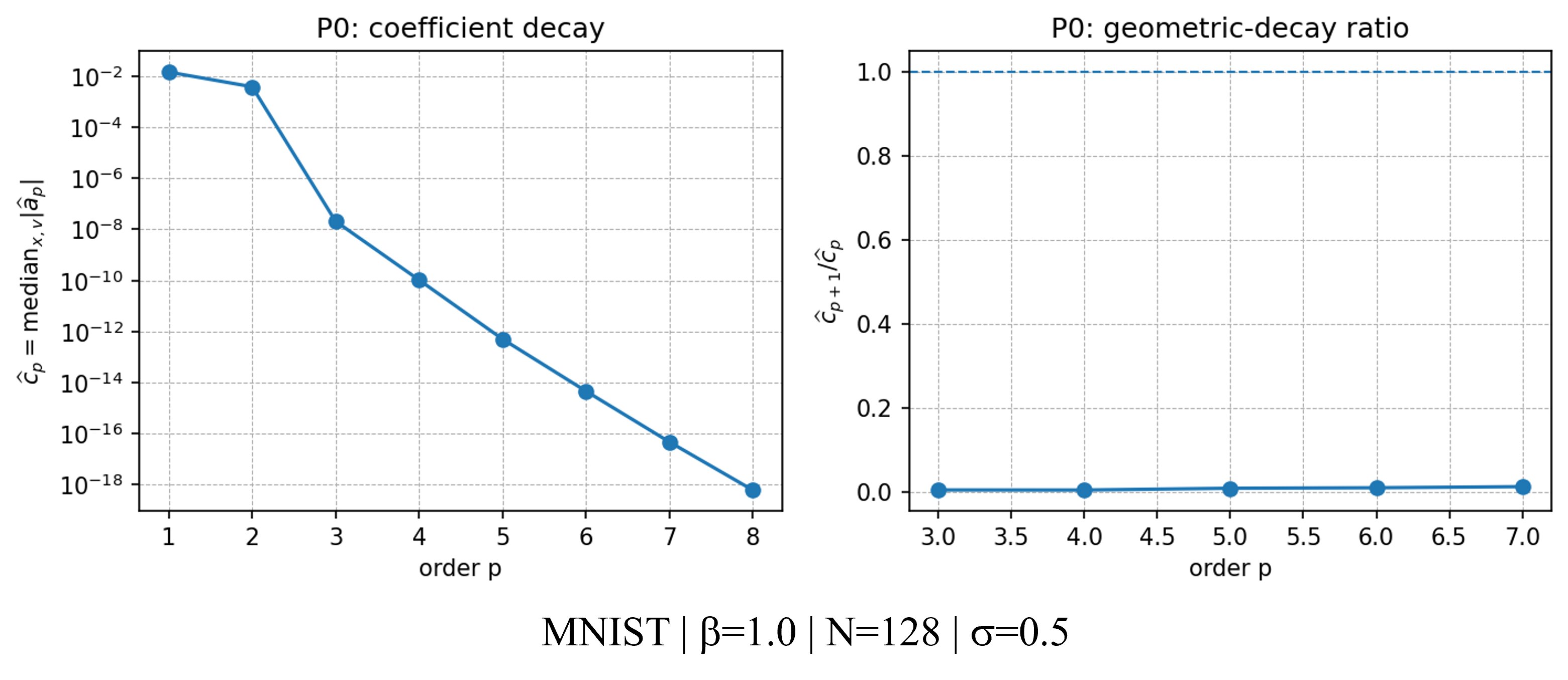}
    \caption{Decay test ($\sigma$ refers to the variance of the noise for the mollification).}
    \label{decay}
\end{figure}

\section{Topological trivialization diagramatics}\label{topodiagra}

\begin{figure}[!h]
    \centering
    \includegraphics[width=1\linewidth]{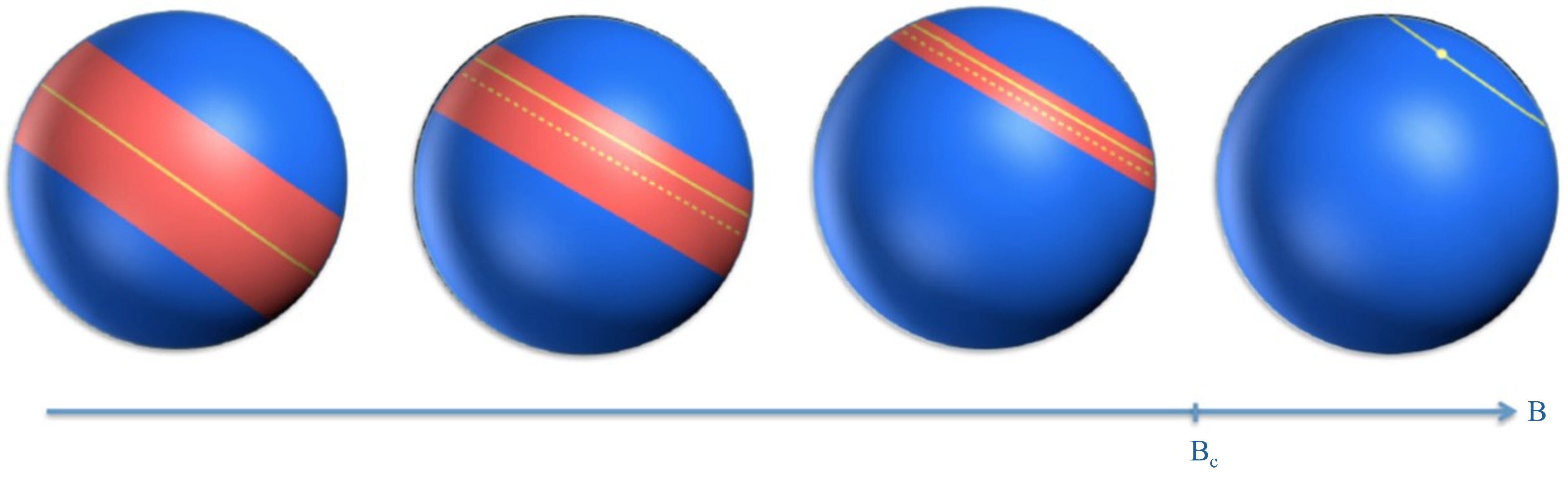}
    \caption{Evolution of the energy landscape in terms of the applied external magnetic field of intensity $B$. The red strip denotes the region on the sphere where minima lie in an exponential number. The continuous yellow line corresponds to the parallel where the deepest minima are located. The dashed yellow line corresponds to the parallel where the most numerous minima are located. At a critical field value $B_c$, the energy landscape has a transition: For $B< B_c$, it is rough and full of minima; for $B_c<B$, it is smooth and contains only one minimum (represented by the yellow dot in the figure). This (static) phase transition at $B_c$ is called ‘topological trivialization’ (of the landscape), a ‘convexification’ of the landscape. Image taken and adapted from \citep{RosBenArousBiroliCammarota2019PRX}.}
    \label{fig:placeholder}
\end{figure}

\begin{figure}[h]
    \centering
    \includegraphics[width=1.0\linewidth]{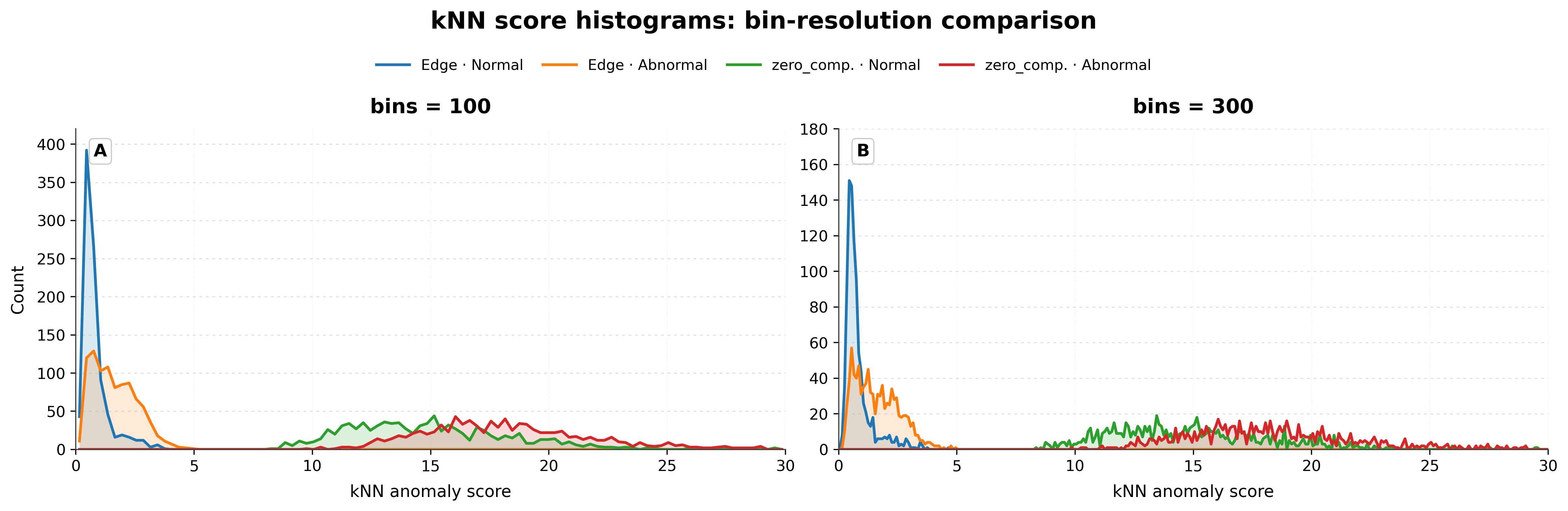}
    \caption{A full RSB region should be full of peaks around different, but very close, values of distance when zooming in, while a topologically trivialized region (single peak) should remain stable, and away from the full RSB region. To test if the above description is correct, we can do a renormalization group transformation on these histograms, modifying the resolution by changing the bins parameter. Same example as that of Fig. \ref{AD_edge_stab}.}
    \label{fig:AD_edge_stab}
\end{figure}

\clearpage
\newpage

\section{The spin glass analogy during training for generation}\label{gen}

The quality of randomly generated data can be measured using the Frechet Inception Distance (FID) \cite{Heusel2017TTUR}. FID compares the distribution of features between the images of the training/testing dataset and an equivalent number of randomly generated images.  We used in this experiment CIFAR10 \cite{Krizhevsky2009TinyImages} and an FID computed using 10,000 samples (we compare the random decoded samples with the \textit{reconstructed} testing set). We call this way of measuring the generation as `self-FID'. A good VAE-based generative model should minimize both of these metrics \textbf{simultaneously}: that is, to be able to generate random samples which are \textbf{in-distribution} w.r.t. the reconstructed dataset (low self-FID), and such that the latter actually resembles the original dataset (low MSE). 

When using the loss in hyperspherical coordinates \cite{Ascarate2025HypersphericalVAE_IJCNN}, we use an annealing schedule \citep{Fu2019CyclicalAnnealing} for the gain $\beta$ of the KLD-like loss, consisting of an initial stage which increases proportionally with \(\sqrt{\text{epoch}}\) for the first \(100\) epochs, and is constant afterwards. This was necessary because we observed that too much compression of the volume was detrimental to the performance (particularly for the MSE, while the self-FID instead still tends to improve), while a strong compression was still necessary at the initial stage (for a good self-FID). This shows true edge-of-stability/critical point behavior/regime. The total training was \(300\) epochs in all cases.  

The gain $\beta$ here has the role of the inverse temperature, $\beta=1/T$. In spin glasses and complex systems, the energy function has exponentially many local minima in the equatorial region of the hypersphere. To overcome them, a very strong signal or bias towards the desired region is necessary at the beginning, together with a rapid cooling or quenching. Thus, our initial high $\beta$ (i.e., very low temperature $T$) setting, and in the presence of the high intensity (regulated by the $\beta^{-1}$ factor in front of the MSE) hyperspherical external magnetic fields as bias in directions away from the equator, should make the gradient descent dynamics to quickly tend towards a low temperature distribution with replica symmetry breaking. Indeed, this is what we observed in our experiments, since we check for the replica angle, as mentioned before. This initial strong compression helps escaping those undesirable equatorial minima (Fig.\ref{fig:16}). Nevertheless, the obtained state shows too much overlapping between samples, so we then perform the annealing (i.e., lower the $\beta$, or increase the temperature $T$, and also lower the intensity of the magnetic fields) in order to allow the system to relax the strong order introduced by the initial bias and, in this way, transition to a replica symmetry breaking state with a bigger angle between replicas (that is, to go back up a bit in the ultrametricity tree/hierarchy of the replica angle values. This decreases the MSE and makes the decoded images more sharp, at the cost of some generation quality (Fig.\ref{fig:17}). Note how the replica angle (red dashed lines in fourth histogram to the left in second row) doesn't fully go back to $\pi/2$, even when the KLD term (where the external magnetic fields are) stops optimizing at this stage of the training process (red line in third row), but instead jumps to a different value, higher than the initial one but still below $\pi/2$. This is fully consistent with the spin glass analogy in a quenched and then annealed system, where the glass, always in the replica symmetry breaking phase, jumps from one so-called `pure state' to a different pure state, i.e., goes back up a bit in the ultrametricity tree/hierarchy of the replica angle values, as mentioned before. But the system has escaped the zone with exponentially many local minima in the equator. 

For more details on improvements in generative metrics in this regime, see \cite{Ascarate2025HypersphericalVAE_IJCNN}.

\begin{figure}[!h]
    \centering
    \includegraphics[width=0.8\linewidth]{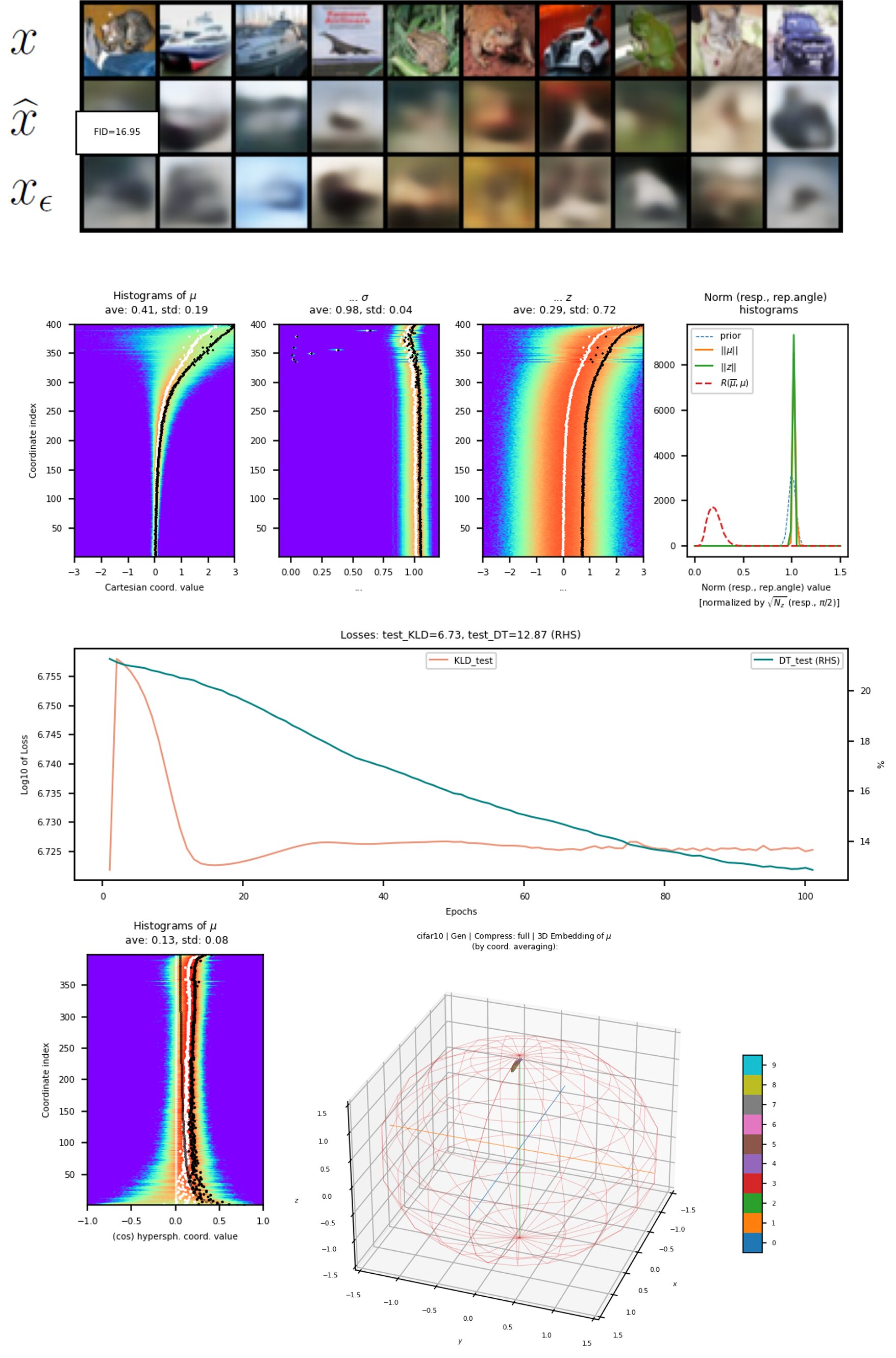}
    \caption{Results of a typical fully compressed VAE training at epoch $100$. The top panel shows the original data ($x$), the reconstruction ($\hat{x}$), and the generation sampling from the prior ($x_\epsilon$). The third panel/row shows the behavior of the MSE (green) and KLD (red) losses during training for the test set. Replica angle: red dashed lines in histogram to the left in second row.}
    \label{fig:16}
\end{figure}

\begin{figure}[!h]
    \centering
    \includegraphics[width=0.8\linewidth]{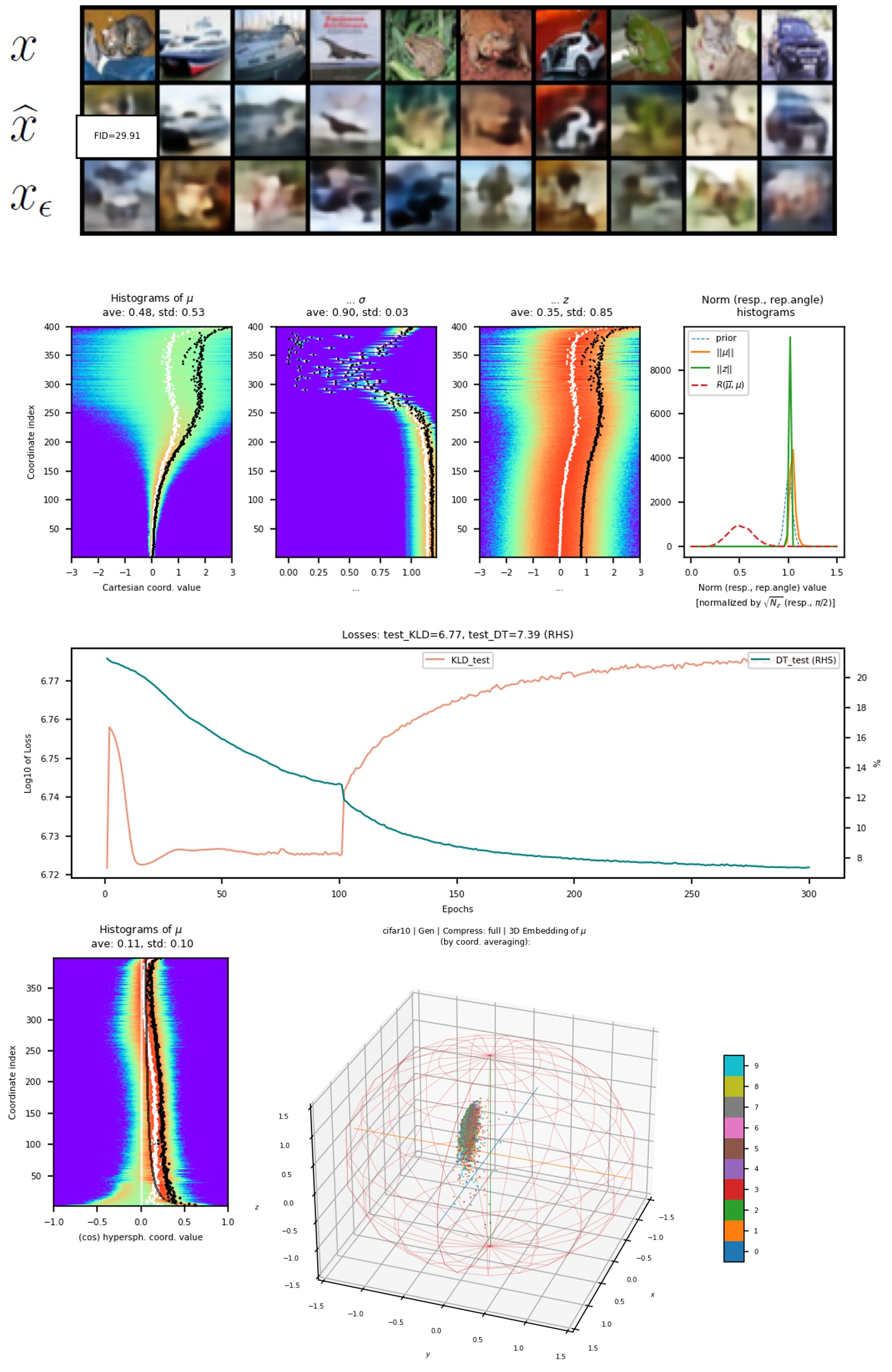}
    \caption{Results of the same fully compressed VAE training at final epoch $300$.}
    \label{fig:17}
\end{figure}

\newpage

\clearpage

\section{AD metrics improvement in GZ}\label{ADGZ}

For more details, see \cite{Ascarate2026VAEHypersphericalAD}.

\begin{figure}[!h]
    \centering
    \includegraphics[width=1\linewidth]{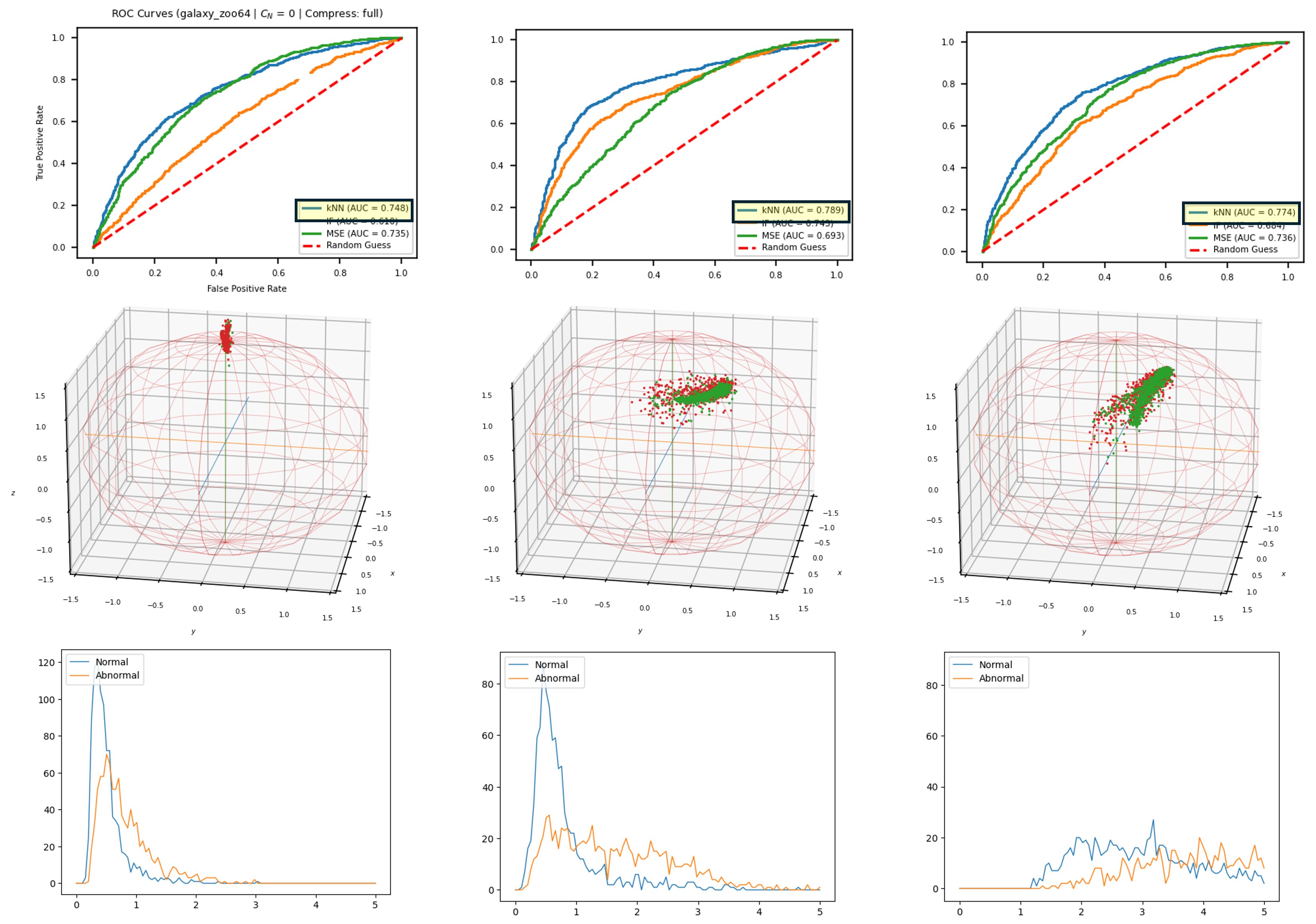}
    \caption{AD metrics improvement at the edge-of-stability in GZ (Middle panel/row: green, normal data; red, anomalies). Same example as that of Fig. \ref{AD_edge_stab}.}
    \label{AD_edge_stabC}
\end{figure}



\end{document}